\definecolor{darkblue}{rgb}{0.1,0.1,0.8}
\definecolor{brickred}{rgb}{0.8, 0.25, 0.33}
\definecolor{DarkGreen}{rgb}{0,0.6,0}
\newtheorem{theorem}{Theorem}
\newtheorem{lemma}{Lemma}
\newtheorem{corollary}{Corollary}
\newtheorem{proposition}{Proposition}
\theoremstyle{definition}
\newtheorem{definition}{Definition}
\newtheorem*{prob*}{Problem}
\newtheorem{remark}{Remark}
\def\old@comma{,}
    \old@comma\discretionary{}{}{}%
\global\long\def\EE{\mathbb{E}}
\global\long\def\PP{\mathbb{P}}
\global\long\def\FF{\mathbb{F}}
\global\long\def\11{\mathbbm{1}}
\def\CalC{\mathcal{C}}
\def\CalD{\mathcal{D}}
\def\CalH{\mathcal{H}}
\def\CalN{\mathcal{N}}
\def\CalR{\mathcal{R}}
\newcommand{\Cal}[1]{\mathcal{#1}}
\global\long\def\+{\oplus}
\def\deq{\mathrel{\ensurestackMath{\stackon[1pt]{=}{\scriptstyle\Delta}}}}
\global\long\def\tensor{\otimes}
 \def\tr{\Tr}
 \def\id{\text{id}}
\def\TDelta{\mathcal{T}_{\delta}}
\def\PiA{\Pi_{\rho^A}}
\def\PiB{\Pi_{\rho^B}}
\def\PiuA{\Pi_{u^n}^A}
\def\PivB{\Pi_{v^n}^B}
\def\lambdauA{\lambda^A_{u^n}}
\def\lambdalA{\lambda^A_{l}}
\def\lambdavB{\lambda^B_{v^n}}
\def\lambdakB{\lambda^B_{k}}
\def\lambdalkAB{\lambda^{AB}_{l,k}}
\def\rhohatuA{\hat{\rho}^A_{u^n}}
\def\rhohatvB{\hat{\rho}^B_{v^n}}
\def\LambdauA{\Lambda^A_{u^n}}
\def\LambdalA{{\Lambda_l^A}}
\def\LambdakB{\Lambda^B_{k}}
\def\TDeltan{\mathcal{T}_{\delta}^{(n)}}
\def\IndiU1{\mathbbm{1}_{\left\{U^{n,(\mu_1)}(l)=u^n\right\}} }
\def\IndiV1{\mathbbm{1}_{\left\{V^{n,(\mu_2)}(k)=v^n\right\}} }
\def\rhotilduA{\tilde{\rho}_{u^n}^A}
\def\rhotildvB{\tilde{\rho}_{v^n}^B}
\def\sigmahatAu{\ket{\hat\sigma_{u^n}}^{AE}}
\def\sigmatildAu{\ket{\tilde\sigma_{u^n}}^{AE}}
\def\sigmahatBv{\ket{\hat\sigma_{v^n}}^{BF}}
\def\sigmatildBv{\ket{\tilde\sigma_{v^n}}^{BF}}
\def\sigmahatAl{{\hat\sigma_{l}}^{AE}}
\def\sigmatildAl{{\tilde\sigma_{l}}^{AE}}
\def\sigmahatBk{{\hat\sigma_{k}}^{BF}}
\def\sigmatildBk{{\tilde\sigma_{k}}^{BF}}
\def\PsiRhoNket{\ket{\Psi_{\rho^{\tensor n}}}^{ABCR}}
\def\clocc{$1$\mbox{-CLOCC}^\prime}
\newcommand{\ral}{\rangle}
\newcommand{\lal}{\langle}
\newcommand{\settildeR}[2]{#1 \in [2^{n\tilde{R}_{#2}}]}
\newcommand{\Unitary}[2]{U_{\scriptscriptstyle \!#1}^#2}
\newacro{qc}[q-c]{quantum-to-classical }
\begin{document}
\pagenumbering{arabic}

\title{\huge Multi-Party Quantum Purity Distillation with Bounded Classical Communication}
\author{
\IEEEauthorblockN{  Touheed Anwar Atif and S. Sandeep
    Pradhan}\\
\IEEEauthorblockA{Department of Electrical Engineering and Computer Science,\\
University of Michigan, Ann Arbor, MI 48109, USA.\\
Email: \tt touheed@umich.edu, pradhanv@umich.edu}
}

\maketitle



\begin{abstract}
  We consider the task of distilling local purity from a noisy quantum state $\rho^{ABC}$, wherein we provide a protocol for three parties, Alice, Bob and Charlie, to distill local purity (at a rate $P$) from many independent copies of a given quantum state $\rho^{ABC}$. The three parties have access to their respective subsystems of $\rho^{ABC}$, and are provided with pure ancilla catalytically, i.e., with the promise of returning them unaltered after the end of the protocol. In addition, Alice and Bob can communicate with Charlie using a one-way multiple-access dephasing channel of link rates $R_1$ and $R_2$, respectively. The objective of the protocol is to minimize the usage of the dephasing channel (in terms of rates $R_1$ and $R_2$) while maximizing the asymptotic purity that  can be jointly distilled from $\rho^{ABC}$. To achieve this, we employ ideas from distributed measurement compression protocols, and in turn, characterize a set of sufficient conditions on $(P,R_1,R_2)$ in terms of quantum information theoretic quantities such that $P$ amount of purity can be distilled using rates $R_1$ and $R_2$. Finally, we also incorporate the technique of asymptotic algebraic structured coding, and provide a unified approach of characterizing the performance limits.
  
\end{abstract}

{\hypersetup{
colorlinks=true, %
 pdfstartview={FitH},
    linkcolor=black,
    citecolor=blue, 
    urlcolor={blue!80!black}
    pagenumbering= {roman}
}
}

\section{Introduction}
\label{sec:introduction}
A primary task in quantum information theory is to quantify the amount of local and non-local information present within a quantum information  source. For instance, the task of entanglement distillation aims at capturing the non-local correlations to transform a noisy shared state $\rho^{AB}$ into pure bell states (in particular, the ebit $|\Phi^+\ral$), in an aymptotic sense. A complementary notion to this task is the paradigm of local purity distillation, where pure ancilla qubits are distilled from a distributed state $\rho^{AB}$ using local unitary operations. 

Although it may seem unusual, local pure states cannot be considered as a free resource. One may argue that pure states can be obtained from a mixed state by performing a measurement, but this is only true after a measurement apparatus is initialized in a pure state. For this reason, the second law of thermodynamics  recognizes purity as indeed a resource \cite{alicki2004thermodynamics,horodecki2005local}. In this regard, the idea of distilling of local purity was first introduced in \cite{oppenheim2002thermodynamical,horodecki2003local} where the aim was to manipulate the qubits and concentrate the existing diluted form of purity. Two version of this problem have been introduced, (i) a single-party variant and (ii) a distributed version. In the former single-party scenario, also called as \emph{local purity concentration}, many copies of a noisy state $\rho^A$ are provided to Alice, and she aims at concentrating or extracting purity using only unitary operations. The authors in \cite{horodecki2003reversible} characterized the asymptotic performance limit of this protocol ($\kappa(\rho^A)$) as the difference between the number of qubits describing the system and the von Neumann entropy of the state $\rho^A$. For the latter case of distilling purity from a non-local distributed state, commonly termed as \emph{local purity distillation}, two parties, Alice and Bob, share many copies of the noisy state $\rho^{AB}$ and aim at jointly distilling pure ancilla qubits. Again, they are allowed to perform only local unitaries and but can communicate classically (LOCC), possibly through the use of a dephasing channel \cite{oppenheim2002thermodynamical}. Further, the protocols for both the variants require isolation (Closed-LOCC) from the environment which eliminated the possibility of unlimited consumption of the pure ancilla qubits. 
The authors in \cite{horodecki2003local} provided bounds for this problem in the one-way and the two-way classical communication scenarios. 

Later, Devetak in \cite{devetak2005distillation} considered a new paradigm called $\clocc$,
which was defined as an extension of Closed-LOCC, with (i) the allowance of using additional catalytic pure ancilla as long as these are returned back to the system, and (ii) the unlimited bidirectional classical communication replaced by unlimited one-way communication from Alice to Bob. Devetak obtained an information theoretic characterization of the distillable purity in the $\clocc$ setting (allowing additional catalysts) and highlighted its connection to the earlier known one-way distillable common randomness measure \cite{devetak2004distilling}. The usage of catalytic resource to improve the quantum information tasks was first introduced in \cite{JonathanEntanglement1999}. This further was extensively studied in a multitude of works, including but not limited to \cite{daftuar2001mathematical,van2002embezzling,turgut2007catalytic,aubrun2008catalytic,sanders2009necessary,brandao2015second,duarte2016self,bu2016catalytic,shiraishi2021quantum,lipka2021catalytic,ding2021amplifying,takagi2021correlation}. Building upon the work of \cite{devetak2005distillation}, the authors in \cite{krovi2007local} extended the result to a setting with bounded one-way classical communication, again allowing for the additional catalytic resource. They improved upon the classical communication rate by using  the Winter's approximate measurement \cite{winter}, instead of an $n$-letter product measurement, and extracted purity for the states obtained thereby. 

In this work, we revisit the task of distilling purity and consider a three-party setup. We ask the question of how many ancilla qubits can be distilled from a noisy state $\rho^{ABC}$, shared among three parties, Alice, Bob and Charlie. Similar to earlier problem formulation, we only allow local unitary operations at each party in a closed setting but permit the use of additional catalytic ancillas with the promise of returning them at the end of the protocol. In addition, similar to \cite{krovi2007local}, we only allow limited classical communication, which we model using a one-way multiple-access dephasing channel, with Alice and Bob as the senders and Charlie as the centralized receiver. 

The contributions of our work can be summarized as follows. We first formulate a three-party purity distillation problem, and develop a $\clocc$ multi-party purity distillation protocol for this problem capable of extracting purity from $n$ copies of the noisy shared state $\rho_{ABC}^{\tensor n}$, using only local unitary operations and a one-way multiple-access dephasing channel. Further, for $\rho_{ABC}^{\tensor n}$, we define the asymptotic performance limit of the problem as the set of all triples $(P,R_1,R_2)$, where $P$ denotes the amount of purity that can be distilled from $\rho^{ABC}$, using $R_1$ and $R_2$ bits of classical communication. 
Then we characterize a quantum-information theoretic inner bound to the achievable rate region in terms of {computable single-letter} information quantities (see Theorem \ref{thm:distPurity}). 

Toward the development of the results, we encounter two main challenges. The first challenge is in the compression of the joint measurements.
Since the classical communication allowed by the protocol is limited, the joint measurements, that Alice and Bob employ, are required to be compressed. Although a distributed measurement compression protocol for compressing a joint measurement have been developed earlier \cite{atif2021faithful}, one cannot directly use this protocol as a complete black box. The reason for this is that the measurement compression protocol also requires additional common randomness as a resource which the current purity distillation protocol does not allow. One may argue that derandomization or similar techniques could be used to remove common randomness constraint, but note that once the protocol is used a black box, derandomization techniques cannot remove the common randomness constraints. The authors in \cite{krovi2007local} has applied derandomization to eliminate common randomness, however, to the best of our knowledge, it fails to achieve the objective as one of their bounds (after \cite[Eq. 30]{krovi2007local}) still require additional common randomness. Apart from this, the measurement compression protocols provided in \cite{winter,wilde_e,atif2021faithful,atif2021distributed} shows the ``faithfulness'' of the post-measurement state of the reference along with the classical-quantum register storing the measurement outcome. These protocols remain unconcerned about the post-measurement state of the system on which the measurement is performed. However, in the current problem the closeness of the latter is needed. To the best of our knowledge, the authors in \cite{krovi2007local} do not make this distinction, and directly employ the result of \cite{winter}. To overcome this, we identify appropriate purifications of the post-measurement reference states and argue an existence of a collection of unitary operations achieving the latter (see Lemma \ref{lem:postMeasuredCloseness} for more details).

The second major challenge is that after the application of the compressed measurement, the states across the three parties are not necessary separable. This is because a compressed measurement is usually not a  ``sharp'' rank-one measurement. In \cite{devetak2005distillation} rank-one measurements are employed which makes the states separable and hence eases the analysis. In \cite{krovi2007local}, while using compressed measurements, the authors fail to justify the separability. To handle this, we develop a technique (see Lemma \ref{lem:Separate}) and employ it in our proof.
Lastly, as another contribution, we  incorporate the  asymptotic algebraic structured coding techniques
and provide a unified approach in characterizing the performance limits (see Def.~\ref{def:RateRegion}).


\section{Preliminaries}
\label{sec:prelim}
\noindent \textbf{Notation:} We supplement the notation in \cite{Wilde_book} with the following. Given any natural number $M$, let the finite set $\{1, 2, \cdots, M\}$ be denoted by $[1,M]$. 
Let $I$ denote the identity operator. 
Given a POVM $M\deq \{\Lambda^A_x\}_{x \in \mathcal{X}}$ acting on  $\rho$, the post-measurement state of the reference together with the classical outputs is represented by $     (\text{id} \tensor M)(\Psi^\rho_{RA})\deq \sum_{x\in \mathcal{X}} \ketbra{x}\tensor \tr_{A}\{(I^R \tensor \Lambda_x^A ) \Psi^\rho_{RA} \}.$ Let $\kappa(\rho^A)$ denote the asymptotic purity distillable  by  local purity concentration protocols from $\rho^A$ \cite{oppenheim2002thermodynamical,horodecki2003local}. We know $\kappa(\rho^A) = \log{\dim(\Cal{H}_A)} - S(\rho^A)$.

\begin{definition}[Faithful simulation \cite{wilde_e}]\label{def:faith-sim}
	Given a POVM ${M}\deq \{\Lambda_x\}_{x\in \mathcal{X}}$ acting on a $\mathcal{H}$ and a density operator $\rho\in \mathcal{D}(\mathcal{H})$, a sub-POVM $\tilde{M}\deq \{\tilde{\Lambda}_{x}\}_{x\in \mathcal{X}}$ acting on $\mathcal{H}$ is said to be $\epsilon$-faithful to $M$ with respect to $\rho$, for $\epsilon > 0$, if the following holds: 
	\begin{equation}\label{eq:faithful-sim-cond-1_2}
	\sum_{x\in \mathcal{X}} \Big\|\sqrt{\rho} (\Lambda_{x}-\tilde{\Lambda}_{x}) \sqrt{\rho}\Big\|_1+\tr\{(I-\sum_{x} \tilde{\Lambda}_{x})\rho\}   \leq \epsilon.
	\end{equation}
\end{definition}


\section{Distributed Purity Distillation}
In the following we describe the problem statement.
Let $\rho^{ABC}$ be a density operator acting on  $\mathcal{H}_A\tensor \mathcal{H}_B \tensor \mathcal{H}_C$.
Consider two measurements $M_A$ and $M_B$ on sub-systems $A$ and $B$, respectively. 
Imagine that we have three parties, named Alice, Bob and Charlie, trying to distill local purity from the noisy joint state $\rho^{ABC}$.
The resources available to these parties are 
(i) the classical communication links of specified rates between Alice and Charlie, and Bob and Charlie, modelled as a multiple-access dephasing channel, and (ii) an additional triple of pure catalytic quantum systems $A_C$, $B_C$ and $C_C$ available to Alice, Bob and Charlie, respectively.
Given the distributed nature of the problem, no communication is possible between Alice and Bob. 
The problem is formally defined in the following.

\begin{definition}\label{def:protocolDist}
	For a given finite set $\mathcal{Z}$, and a Hilbert space $\mathcal{H}_{A}\tensor \mathcal{H}_B\tensor \mathcal{H}_C$, a distributed purity distillation protocol with parameters $(n,\Theta_1,\Theta_2,\kappa_1, \kappa_2,\kappa_3, \iota_1,\iota_2,\iota_3)$ 
	is characterized by 
	
	\begin{enumerate}
	    \item a unitary operation on Alice's system $U_A \colon \CalH_A^{\tensor n} \tensor \CalH_{A_C} \rightarrow \CalH_{A_p}\tensor \CalH_{X_1}\tensor \CalH_{A_g}$, with $\dim(\CalH_{A_p})  = \kappa_1$, $\dim(\CalH_{A_C})  = \iota_1$, and $\dim(\CalH_{X_1})  = \Theta_1$.
	    \item a unitary operation on Bob's system $U_B \colon \CalH_B^{\tensor n} \tensor \CalH_{B_C} \rightarrow \CalH_{B_p}\tensor \CalH_{X_2}\tensor \CalH_{B_g}$, with $\dim(\CalH_{B_p})  = \kappa_2$, $\dim(\CalH_{B_C})  = \iota_2$, and $\dim(\CalH_{X_2})  = \Theta_2$.
	    \item a multiple access dephasing channel $\CalN \colon \CalH_{X_1} \tensor\CalH_{X_2} \rightarrow \CalH_{X_1} \tensor\CalH_{X_2}$.
	    \item a unitary operation on Charlie's system $U_C \colon \CalH_B^{\tensor n}  \tensor \CalH_{C_C} \tensor \CalH_{X_1} \tensor\CalH_{X_2}\rightarrow \CalH_{C_p}\tensor \CalH_{C_g}$, with $\dim(\CalH_{C_C})  = \iota_3$ and $\dim(\CalH_{C_p})  = \kappa_3$.
	\end{enumerate}

	%
	%
	%
	%
	%
\end{definition}
\begin{definition}\label{def:achievable}
	Given  a quantum state $\rho^{ABC}\in \mathcal{D}(\mathcal{H}_{A}\tensor \mathcal{H}_B \tensor \CalC)$, a triple $(P,R_1,R_2)$ is said to be achievable, if for all $\epsilon>0$ and for all sufficiently large $n$, there exists a distributed purity distillation protocol with parameters  $(n, \Theta_1, \Theta_2, \kappa_1, \kappa_2,\kappa_3, \iota_1,\iota_2,\iota_3)$ such that
	\begin{align}
	    G \deq \|\xi^{A_pB_pC_p} - \ketbra{0}^{A_p}&\tensor \ketbra{0}^{B_p}\tensor\ketbra{0}^{C_p}\|_1 \leq \epsilon, \nonumber \\
	    \frac{1}{n} \log_2 \Theta_i \leq R_i+\epsilon \colon  i\in [2],&\quad 
	    \frac{1}{n} \sum_{i\in[3]}\left(\log_2 \kappa_i - \log_2\iota_i\right) \leq P+\epsilon, \nonumber
	\end{align} 
	where $\ket{\xi} \deq U_C \Cal{N} U_B U_A |{\Psi^{\tensor n}_{\rho}}\ral^{ABCR}$, and $|{\Psi_\rho^{\tensor n}}\ral^{ABCR}$ is a purification of $(\rho^{ABC})^{\tensor n}$.
	The set of all achievable triples $( P,R_1,R_2)$ is called the achievable rate region. 
\end{definition}

Given a POVM $M\deq \{\Lambda^A_x\}_{x \in \mathcal{X}}$ acting on  $\rho$, the post-measurement state of the reference together with the classical outputs is represented by $     (\text{id} \tensor M)(\Psi^\rho_{RA})\deq \sum_{x\in \mathcal{X}} \ketbra{x}\tensor \tr_{A}\{(I^R \tensor \Lambda_x^A ) \Psi^\rho_{RA} \}.$
\begin{definition}\label{def:RateRegion}
		Consider a quantum state $\rho^{ABC}\in \mathcal{D}(\mathcal{H}_{A}\tensor \mathcal{H}_B \tensor \mathcal{H}_C)$, and a POVM
		$M_{AB}=\bar{M}_{A}\tensor \bar{M}_B$ acting on  $\mathcal{H}_{A}\tensor \mathcal{H}_B$ where $\bar{M}_A=\{\bar{\Lambda}^A_{s}\}_{s \in \mathcal{S}}$ and $\bar{M}_B=\{ \bar{\Lambda}^B_{t}\}_{t \in \mathcal{T}}$. Define the auxiliary states
\begin{align*}
    \sigma_{1}^{RBCS} &\deq (\emph{id}_R\tensor \bar{M}_{A} \tensor \emph{id}_{BC}) ( \Psi_{\rho}^{R A BC}), \quad
    \sigma_{2}^{RACT} \deq (\emph{id}_R\tensor \emph{\id}_{AC} \tensor \bar{M}_{B}) ( \Psi_{\rho}^{R A BC}),
     \quad \text{and} \nonumber \\
    & \sigma_3^{RST}   \deq \sum_{s,t}\sqrt{\rho^{AB}}\left (\bar{\Lambda}^A_{s}\tensor \bar{\Lambda}^B_{t}\right )\sqrt{\rho^{AB}} \tensor \ketbra{s}\tensor\ketbra{t},
\end{align*}
for some orthonormal sets $\{\ket{s}\}_{s\in\mathcal{S}}$ and $ \{\ket{t}\}_{t\in\mathcal{T}}$,
where $\Psi_{\rho}^{R A BC}$ is a purification of $\rho^{ABC}$. Let $\CalR_b(\rho^{ABC},M_{AB})$ be defined as the set of all pairs $(R_1,R_2)$ such that there exists a prime finite field $\mathbb{F}_p$, for a prime $p$, and a pair of mappings $f_S:\mathcal{S} \rightarrow \mathbb{F}_p$ and 
$f_T:\mathcal{T} \rightarrow \mathbb{F}_p$, 
yielding $U=f_S(S)$, $V=f_T(T)$, and either $ W=U+V $ (with respect to $\FF_p$)
or $ W = (U,V)$,
and the following inequalities are satisfied:
		\begin{subequations}
			\begin{align}
			R_1 &\geq I(U;RBC)_{\sigma_1}+I^+_b(W;V)_{\sigma_3} - I_b(U;V)_{\sigma_3},\nonumber\\
			R_2 &\geq I(V;RAC)_{\sigma_2}+I^+_b(W;U)_{\sigma_3}-I_b(U;V)_{\sigma_3},\nonumber\\
			{R_1+R_2} &\geq I(U;RBC)_{\sigma_1}+I(V;RAC)_{\sigma_2}-I_b(U;V)_{\sigma_3}+I^+_b(W;U)_{\sigma_3} +I^+_b(W;V)_{\sigma_3} \!- I^+_b(U;V)_{\sigma_3},\nonumber
			\end{align}
		\end{subequations}
     where  $I_b(\cdot)_{\sigma} = b\times I(\cdot)_{\sigma}$, and
    $I^+_b(W;U)_{\sigma_3}=I_b(W;U)_{\sigma_3}$, $I^+_b(W;V)_{\sigma_3}=I_b(W;V)_{\sigma_3}$, 
    $I^+_b(U;V)_{\sigma_3}=I_b(U;V)_{\sigma_3}$ if $W=U + V$, and 
    $I^+_b(W;U)_{\sigma_3}=I^+_b(W;V)_{\sigma_3}=I^+_b(U;V)_{\sigma_3}=0$ if $W=(U,V)$.
\end{definition}

\begin{theorem}\label{thm:distPurity}
    Given a quantum state $\rho^{ABC} \in \CalD(\CalH_A\tensor \CalH_B\tensor\CalH_C)$, 
      a triple $(R_1,R_2,P)$ is achievable if there exists a POVM $M_{AB}=\bar{M}_A\tensor \bar{M}_B$ acting on $\mathcal{H}_{A}\tensor \mathcal{H}_B$ with POVMs $\bar{M}_A=\{{\Lambda}^A_{s}\}_{s \in \mathcal{S}}$ and $\bar{M}_B=\{ {\Lambda}^B_{t}\}_{t \in \mathcal{T}}$ $\mathcal{H}_{A}\tensor \mathcal{H}_B$ and a real number $b\in [0,1]$ such that the following holds: 
    \begin{align}
    P & \leq \kappa(\rho_A) + \kappa(\rho_B) + \kappa(\rho_C) + I(C;W)_{\sigma} - I_b(U;V)_{\sigma_3}+I^+_b(W;U)_{\sigma_3} +I^+_b(W;V)_{\sigma_3} - I^+_b(U;V)_{\sigma_3}, \nonumber
    \end{align} and $(R_1,R_2) \in \CalR_b(\rho^{ABC},M_{AB})$,
    where $$\sigma^{RCST} \deq (\emph{id}_R\tensor \emph{\id}_{C} \tensor \bar{M}_{A}\tensor \bar{M}_{B}) ( \Psi_{\rho}^{R A BC}).$$
\end{theorem}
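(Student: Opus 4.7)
The plan is to construct an explicit \clocc{} protocol that first applies a distributed compressed version of the product POVM $M_{AB}=\bar{M}_A \tensor \bar{M}_B$ at Alice and Bob, transmits a suitable function of the outcomes through the multiple-access dephasing channel to Charlie, and then has each party perform local purity concentration on its post-measurement share, as quantified by $\kappa(\cdot)$. The rate region $\CalR_b(\rho^{ABC},M_{AB})$ will emerge as the set of rates supporting faithful simulation of $M_{AB}^{\tensor n}$ on $\rho_{ABC}^{\tensor n}$ via the distributed measurement-compression framework (with a structured coding layer governed by $b$), and the bound on $P$ will emerge from an entropy accounting of the three post-measurement systems, augmented by the side information $W$ available at Charlie.

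\textbf{Codebook and simulating POVM.} First I would fix a POVM $M_{AB}$, the field $\FF_p$, the maps $f_S,f_T$, and the combiner $W\in\{U+V,(U,V)\}$ as in Def.~\ref{def:RateRegion}, and pass to a pruned sub-POVM supported on the $\delta$-typical set of $\bar{M}_A^{\tensor n}\tensor \bar{M}_B^{\tensor n}$. Following the distributed measurement compression construction of \cite{atif2021faithful} (with the algebraic / coset-code variant indexed by $b$), I would generate two random codebooks $\{U^n(a_1,i)\}$ and $\{V^n(a_2,j)\}$ (either i.i.d.\ unstructured for $b=0$ or partially linear over $\FF_p$ via $f_S,f_T$ and $W=U+V$ for $b=1$, with interpolation handled by $b\in[0,1]$). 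The simulating POVM at Alice consists of suitably scaled operators $\gammaCoeff\, \LambdauhatMuX$ drawn from the Alice codebook, and analogously at Bob; joint typicality packing/covering arguments give that the rate constraints in $\CalR_b$ are exactly the requirements for this pair of compressed POVMs to $\epsilon$-faithfully simulate $M_{AB}^{\tensor n}$ in the sense of Def.~\ref{def:faith-sim}, when acting on the purification $\Psi_\rho^{RABC,\tensor n}$.

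\textbf{Post-measurement purification and Stinespring embedding.} The main obstacle is that Def.~\ref{def:faith-sim} only controls the reference-plus-classical-register state, whereas the purity distillation task requires closeness of the actual post-measurement states on $A^nB^n$. I would therefore apply Lemma~\ref{lem:postMeasuredCloseness} to pick isometric Stinespring dilations of both the true $M_{AB}^{\tensor n}$ and the compressed POVM into Alice's, Bob's, and the channel registers $X_1,X_2$, such that faithfulness of the reference state implies, via Uhlmann, the existence of local unitaries $U_A,U_B$ yielding joint closeness of the \emph{whole} purified post-measurement state to the target $\ket{\xi}$-like state; this is precisely what the tracenorm closeness $G\le\epsilon$ in Def.~\ref{def:achievable} demands.

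\textbf{Separation and purity extraction.} After the compressed measurement the joint state on $A^n B^n C^n$ conditioned on a transmitted codeword index is not automatically a product across the three parties, because the compressed POVM elements are not rank-one. I would invoke Lemma~\ref{lem:Separate} to replace the post-measurement tripartite state (conditioned on the classical message $W$ sent through $\CalN$) by a state that, up to $\epsilon$ error in trace norm, factorizes across $A,B,C$ and has marginals close to the typical projections of $\rho^A,\rho^B$ and of $\rho^C_{|W}$ respectively. On each marginal I then run local purity concentration \cite{horodecki2003reversible}, which extracts $n(\kappa(\rho_A)-\epsilon)$ qubits on Alice, $n(\kappa(\rho_B)-\epsilon)$ on Bob, and $n(\kappa(\rho_C)+I(C;W)_\sigma-\epsilon)$ on Charlie (the last because Charlie first decodes $W$ from the channel output, then concentrates purity from the conditional state of $C^n$ given $W$, with conditional entropy $S(C|W)_\sigma=S(\rho_C)-I(C;W)_\sigma$).

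\textbf{Rate accounting.} Finally, I would account for ancilla: Alice and Bob must each invest $n\bigl(I(U;RBC)_{\sigma_1}+I^+_b(W;V)_{\sigma_3}-I_b(U;V)_{\sigma_3}\bigr)$ and $n\bigl(I(V;RAC)_{\sigma_2}+I^+_b(W;U)_{\sigma_3}-I_b(U;V)_{\sigma_3}\bigr)$ pure-ancilla qubits respectively to realize the compressed POVMs as Naimark dilations (these qubits are the dephasing-channel inputs $X_1,X_2$ after tracing out local garbage), and the structured-coding gain $-I^+_b(U;V)_{\sigma_3}$ appears once in the joint budget, exactly recovering the $P$-bound
\[
P\;\le\;\kappa(\rho_A)+\kappa(\rho_B)+\kappa(\rho_C)+I(C;W)_\sigma-I_b(U;V)_{\sigma_3}+I^+_b(W;U)_{\sigma_3}+I^+_b(W;V)_{\sigma_3}-I^+_b(U;V)_{\sigma_3}
\]
after subtracting catalytic ancilla from the distilled totals. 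Choosing $n$ large and $\epsilon$ small and invoking a standard derandomization over the codebook randomness (which is permissible here because the final protocol uses the codebook as a fixed deterministic object, not as a black-box requiring common randomness at runtime) completes the achievability proof. The two lemmas \ref{lem:postMeasuredCloseness} and \ref{lem:Separate} are the crucial new ingredients that bridge the gap left by \cite{krovi2007local}.
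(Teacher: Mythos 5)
Your proposal follows essentially the same route as the paper's proof: a compressed distributed measurement built from random (possibly algebraically structured) codebooks over pruned typical distributions, Lemma~\ref{lem:postMeasuredCloseness} to upgrade reference-state faithfulness to closeness of the actual post-measurement states via Uhlmann, Lemma~\ref{lem:Separate} to factor out pure ancilla across the three parties, binning with Charlie decoding the pair of indices to reduce the transmitted rates (the role of $b$), and derandomization over the codebook at the end. The only cosmetic difference is that you phrase the measurement layer as invoking the faithful-simulation black box of distributed measurement compression, whereas the paper deliberately rebuilds the approximating POVMs from scratch (dropping the outer cut-off operator and proving a new operator Chernoff bound) so that the post-measurement closeness holds pointwise without common randomness — but since you invoke the same two lemmas in the same roles and address the common-randomness issue via derandomization, this is the same argument in substance.
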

\begin{proof}
The proof is provided in Section \ref{sec:proofofTheoremDist}.
\end{proof}
\begin{definition}
        Given a quantum state $\rho^{ABC} \in \CalD(\CalH_A\tensor \CalH_B\tensor\CalH_C)$, and a dephasing channel with communication links of rates $R_1$ and $R_2$ define $1$-way distillable distributed local purity  $\kappa_{\rightarrow}(\rho^{ABC},R_1,R_2)$ as the supremum of the sum of all the locally distillable purity.
\end{definition}

\begin{corollary}
Given a quantum state $\rho^{ABC} \in \CalD(\CalH_A\tensor \CalH_B\tensor\CalH_C)$, let
\begin{align}
    \kappa^I_{\rightarrow}(\rho^{ABC},R_1,R_2) & \deq  \kappa(\rho^A) + \kappa(\rho^B) + \kappa(\rho^C)+ P^D_\rightarrow(\rho^{ABC},R_1,R_2), \nonumber\\
    P^D_\rightarrow(\rho^{ABC},R_1,R_2)& \deq \frac{1}{n}\lim_{n \rightarrow \infty}  \bar{P}^D_\rightarrow((\rho^{ABC})^{\tensor n},nR_1,nR_2), \nonumber\\
    \bar{P}^D_\rightarrow(\rho^{ABC},R_1,R_2)  & \deq \!\!\!\max_{M_{AB}, b\in[0,1]}\{I(C;W)_\sigma - I_b(U;V)_{\sigma} \nonumber \\ 
    & \hspace{0.2in}+I^+_b(W;U)_{\sigma_3} +I^+_b(W;V)_{\sigma_3} - I^+_b(U;V)_{\sigma_3} :(R_1,R_2) \in \CalR_b(\rho^{ABC},M_{AB}) \}. \nonumber
\end{align}
 With the above definitions, we have  $\kappa^I_{\rightarrow}(\rho^{ABC},R_1,R_2) \leq \kappa_{\rightarrow}(\rho^{ABC},R_1,R_2)$. In other words, for any communication rates $(R_1,R_2) $, $\kappa^I_{\rightarrow}(\rho^{ABC},R_1,R_2)$ amount of purity can be jointly distilled from the three parties using the protocol defined in Def. \ref{def:protocolDist}. 
\end{corollary}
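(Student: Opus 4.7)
The plan is to deduce the corollary by applying Theorem \ref{thm:distPurity} to a blocked state $(\rho^{ABC})^{\tensor k}$ for each block length $k$, and then passing to the limit $k \to \infty$ so that the single-letter achievability statement upgrades into the multi-letter quantity $\kappa^I_\rightarrow$ defined through $P^D_\rightarrow$.

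First I would fix $k \geq 1$ and apply Theorem \ref{thm:distPurity} to the tripartite state $\tilde\rho^{\tilde A \tilde B \tilde C} \deq (\rho^{ABC})^{\tensor k}$, viewed as living on $\tilde A = A^{\tensor k}$, $\tilde B = B^{\tensor k}$, $\tilde C = C^{\tensor k}$, with classical communication rates $\tilde R_1 = kR_1$ and $\tilde R_2 = kR_2$. Theorem \ref{thm:distPurity} grants achievability of any $\tilde P$ that satisfies its single-letter bound for some product POVM $\tilde M_{\tilde A \tilde B} = \bar M_{\tilde A} \tensor \bar M_{\tilde B}$ and $b \in [0,1]$ with $(kR_1, kR_2) \in \CalR_b(\tilde\rho, \tilde M_{\tilde A \tilde B})$. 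Using additivity of the von Neumann entropy, $\kappa(\tilde\rho^{\tilde X}) = k\,\kappa(\rho^X)$ for each $X \in \{A,B,C\}$, so maximizing over the pair $(\tilde M_{\tilde A \tilde B}, b)$ produces an achievable per-block purity rate
\[
\tilde P \;\leq\; k\bigl[\kappa(\rho^A) + \kappa(\rho^B) + \kappa(\rho^C)\bigr] \;+\; \bar P^D_\rightarrow\bigl((\rho^{ABC})^{\tensor k}, kR_1, kR_2\bigr).
\]

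Next I would rescale to per-copy-of-$\rho^{ABC}$ quantities. A length-$m$ protocol granted by Theorem \ref{thm:distPurity} applied to $\tilde\rho$ consumes $m$ copies of $\tilde\rho$, i.e. $mk$ copies of $\rho^{ABC}$, and uses $m\tilde R_i = mkR_i$ dephasing-channel bits, which is $R_i$ bits per copy of $\rho^{ABC}$. The resulting per-copy purity rate is $\tilde P/k$, which by the previous display is at least $\kappa(\rho^A) + \kappa(\rho^B) + \kappa(\rho^C) + \tfrac{1}{k}\bar P^D_\rightarrow\bigl((\rho^{ABC})^{\tensor k}, kR_1, kR_2\bigr)$. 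Taking $k \to \infty$ and invoking the definition of $P^D_\rightarrow$ yields achievability of $\kappa^I_\rightarrow(\rho^{ABC}, R_1, R_2)$, hence $\kappa^I_\rightarrow \leq \kappa_\rightarrow$.

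The main technical obstacle is juggling the two nested asymptotics cleanly: Theorem \ref{thm:distPurity} already takes a limit $n \to \infty$ internally (with an $\epsilon$-error tolerance on the trace distance in Definition \ref{def:achievable} and on the communication-rate slackness), while the blocking argument adds an outer $k \to \infty$. I would handle this by a standard diagonalization: fix $\epsilon > 0$, choose $k$ large enough that $\tfrac{1}{k}\bar P^D_\rightarrow$ is within $\epsilon/2$ of $P^D_\rightarrow$, then invoke Theorem \ref{thm:distPurity} with tolerance $\epsilon/2$ to obtain, for all sufficiently large $n$, a valid protocol on $nk$ copies of $\rho^{ABC}$ whose parameters satisfy Definition \ref{def:achievable} with total tolerance $\epsilon$. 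This reduces the remainder to a routine check that additivity of $\kappa$ and the per-block optimization over product POVMs $\bar M_{\tilde A} \tensor \bar M_{\tilde B}$ interact with the definition of $\bar P^D_\rightarrow$ in the expected way.
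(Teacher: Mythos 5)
Your proposal is correct and is precisely the paper's own argument: the paper's proof is the single line ``follows from Theorem \ref{thm:distPurity} and regularization,'' and your blocking of $(\rho^{ABC})^{\tensor k}$ with rates $(kR_1,kR_2)$, additivity of $\kappa$, per-copy rescaling, and the outer limit $k\to\infty$ is exactly what that regularization step means, with the diagonalization over the nested asymptotics spelled out explicitly. No gaps beyond what the paper itself glosses over (e.g., existence of the limit defining $P^D_\rightarrow$, which can be handled by superadditivity or by replacing the limit with a liminf).
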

\begin{proof}
The proof follows from Theorem \ref{thm:distPurity} and regularization.
\end{proof}

\section{Proof of Theorem \ref{thm:distPurity}}
\label{sec:proofofTheoremDist}
Observe that the theorem involves two different cases of $W$, one being equal to the sum $U+ V$, and another being the pair $(U,V)$.
We provide a complete proof for the latter case here. The proof of the former follows by employing the coding strategy from \cite[Theorem 2]{atif2021distributed} and performing a similar analysis as below.

The proof is mainly composed of two parts. In the first part, we construct a protocol by developing all the actions of the three parties, and describe them as unitary evolution (as these are the only actions allowed by the protocol, Def. \ref{def:protocolDist}).  Simultaneously, we also provide necessary lemmas needed for the next part. The second part deals with characterizing the action of the developed unitary operators on the shared quantum state $\rho^{ABC}$ and then bounding the error between the final state and the desired pure state. Since our result is derived for a bounded communication channel, we start by approximating the measurements to achieve a decreased outcome set, while preserving the statistics of the measurement. 
\subsection{Approximation of the measurement $M_A \tensor M_B$}
We start by generating the canonical ensembles corresponding to ${M}_{A}$ and ${M}_B$, defined as
\begin{align}\label{eq:dist_canonicalEnsemble}
\lambda^A_u \deq \tr\{{\Lambda}^A_u \rho^A\}, &\quad \lambda^B_v \deq \tr\{{\Lambda}^B_v \rho^B\}, \quad
 \lambda^{AB}_{uv} \deq \tr\{({\Lambda}^A_u \tensor \bar{\Lambda}^B_v) \rho^{AB}\}, \quad \text{and}\nonumber \\
\hat{\rho}^A_u \deq \frac{1}{\lambda^A_u}\sqrt{\rho^A}  {\Lambda}^A_u \sqrt{\rho^A}, &\quad \hat{\rho}^B_v \deq \frac{1}{\lambda^B_v}\sqrt{\rho^B}{\Lambda}^B_v \sqrt{\rho^B}, \quad 
 \hat{\rho}^{AB}_{uv} \deq \frac{1}{\lambda^{AB}_{uv}}\sqrt{\rho^{AB}} ({\Lambda}^A_u \tensor {\Lambda}^B_v) \sqrt{\rho^{AB}}.
\end{align}

Let $\PiA$ and $\PiB$ denote the $\delta$-typical projectors (as in \cite[Def. 15.1.3]{Wilde_book}) for marginal density operators $\rho^A$ and $\rho^B$, respectively.
Also, for any $u^n\in \mathcal{U}^n$ and $v^n \in \mathcal{V}^n$, let $\PiuA$ and $\PivB$ denote the strong conditional typical projectors (as in \cite[Def. 15.2.4]{Wilde_book}) for the canonical ensembles $\{\lambda^A_u, \hat{\rho}^A_u\}$ and $\{\lambda^B_v, \hat{\rho}^B_v\}$, respectively.

For each $u^n\in \TDeltan(U)$ and $v^n \in \TDeltan(V)$ define 
\begin{equation*}
\rhotilduA \deq \PiA \PiuA \rhohatuA \PiuA \PiA, \quad 
\rhotildvB \deq \PiB \PivB \rhohatvB \PivB \PiB,
\end{equation*}
and $\rhotilduA = 0,  $ and $\rhotildvB = 0$ for $u^n \notin \TDeltan(U)$ and $v^n \notin \TDeltan(V)$, respectively, with $\rhohatuA \deq \bigotimes_{i} \hat{\rho}^A_{u_i}$ and $\rhohatvB \deq \bigotimes_{i} \hat{\rho}^B_{v_i}$. 
Note that using the Gentle Measurement Lemma \cite{Wilde_book}, for any given $\epsilon \in (0,1)$, and sufficiently large $n$ and sufficiently small $\delta$, we have 
\begin{align}
    \|\rhohatuA - \rhotilduA\|_1 \leq \epsilon, \text{ and }\|\rhohatvB - \rhotildvB\|_1 \leq \epsilon, 
\end{align}
for all $u^n \in \TDelta(U)$ and $v^n \in \TDelta(V)$. Now we describe the random coding argument.
Randomly and independently select $2^{n\tilde{R}_1}$ and $2^{n\tilde{R}_2}$ sequences $(U^n(l), V^n(k))$ according to the pruned distributions, i.e.,
 \begin{align}\label{def:prunedDist}
     &\PP\left((U^{n,(\bar{\mu}_1)}(l), V^{n,(\bar{\mu}_2)}(k)) = (u^n,v^n)\right) = \left\{\begin{array}{cc}
          \dfrac{\lambdauA}{(1-\varepsilon)}\dfrac{\lambdavB}{(1-\varepsilon')} \quad & \mbox{for} \quad u^n \in \mathcal{T}_{\delta}^{(n)}(U), v^n \in \mathcal{T}_{\delta}^{(n)}(V)\\
           0 & \quad \mbox{otherwise}
     \end{array} \right. \!\!,
 \end{align} 
 where $\varepsilon = \sum_{u^n \in \mathcal{T}_{\delta}^{(n)}(U)}\lambdauA$ and $\varepsilon' = \sum_{v^n \in \mathcal{T}_{\delta}^{(n)}(V)}\lambdavB$. 
 Let $\mathcal{C}$ denote the codebook containing all  pairs of  codewords  $(U^{n}(l),V^{n}(k))$.

Construct operators
\begin{align}\label{eq:A_uB_v}
A_{u^n} &\deq  \gamma_{u^n} \bigg(\sqrt{\rho_{A}}^{-1}\rhotilduA\sqrt{\rho_{A}}^{-1}\bigg)\quad  \text{ and  } \quad
B_{v^n} \deq \zeta_{v^n} \bigg(\sqrt{\rho_{B}}^{-1}\rhotildvB\sqrt{\rho_{B}}^{-1}\bigg),
\end{align}
where 
\begin{align}\label{eq:gamma_mu}
 \gamma_{u^n}&\deq  \frac{1-\varepsilon}{1+\eta}2^{-n\tilde{R}_1}|\{l: U^{n}(l)=u^n\}|\quad  \text{ and  }
 \zeta_{v^n}\deq  \frac{1-\varepsilon'}{1+\eta}2^{-n\tilde{R}_2}|\{k: V^{n}(k)=v^n\}|, 
\end{align}
where $\eta \in (0,1)$ is a parameter that determines the probability of not obtaining sub-POVMs.
 Then construct $M_1^{( n)}$ and $M_2^{( n)}$ as in the following
\begin{align}
   M_1^{(n)}& \deq \{A_{u^n} \colon u^n \in  \mathcal{T}_{\delta}^{(n)}(U)\},
   M_2^{(n)} \deq \{B_{v^n} \colon v^n \in  \mathcal{T}_{\delta}^{(n)}(V)\}.
\label{eq:POVM_r1}
\end{align} 
We show later  that $M_1^{(n)}$ and $M_2^{(n)}$ form sub-POVMs, with high probability, 
 These collections $ {M}_1^{( n)}$ and $ {M}_2^{( n)}$ are completed using the operators $I - \sum_{u^n \in \mathcal{T}_{\delta}^{(n)}(U)}A_{u^n}$ and $I - \sum_{v^n\in \TDeltan(V)}B_{v^n}$, and these operators are associated with sequences $u^n_0$ and $v^n_0$, which are chosen arbitrarily from $\mathcal{U}^n\backslash\mathcal{T}_{\delta}^{(n)}(U) $ and $\mathcal{V}^n\backslash \mathcal{T}_{\delta}^{(n)}(V)$, respectively.
Let $\mathbbm{1}_{\{\mbox{sP-i}\}}$ denote the indicator random variable corresponding to the event that  $M_i^{(n)}$ form  sub-POVM for $i=1,2$. 
We use the trivial POVM $\{I\}$ in the case of the complementary event and associate it with $u_0^n$ and $v_0^n$ as the case maybe. In summary, the POVMs are given by $\{ \mathbbm{1}_{\{\mbox{sP-1}\}} A_{u^n}+(1- \mathbbm{1}_{\{\mbox{sP-1}\}})  \mathbbm{1}_{\{u^n=u^n_0\}} I \}_{u^n \in \mathcal{U}^n}$, and 
$\{ \mathbbm{1}_{\{\mbox{sP-2}\}} B_{v^n}+(1- \mathbbm{1}_{\{\mbox{sP-2}\}})  \mathbbm{1}_{\{v^n=v^n_0\}} I \}_{v^n \in \mathcal{V}^n}.$

Now, we intend to use the completions $[M_1^{( n, \bar{\mu}_1)}]$ and $[M_2^{( n, \bar{\mu}_2)}]$ in constructing the unitaries $U_A$ and $U_B$, as described in the protocol (Def. \ref{def:protocolDist}), for Alice and Bob, respectively.
Before concluding the discussion on the POVMs, we provide two lemmas which would be useful in the sequel.
The first lemma deals with bounding from below the probability that the constructed collection of operators indeed form a sub-POVM.
Toward this, observe that the collections of  approximating POVMs, $\{A_{U^n(l)}\}$ and $\{B_{V^n(k)}\}$, constructed in this work are identical to the ones employed in \cite{winter,wilde_e,atif2021faithful}, however, with one subtle difference: $A_{U^n(l)}$'s and $B_{V^n(k)}$'s do not have the outermost cut-off operator. Note that, it is only this cut-off operator, in the definition of $A_{U^n(l)}$'s and $B_{V^n(k)}$'s, which is constructed in a expected sense. Hence its absence allows us to maintain point-wise closeness of $\rhotilduA$ and $\rhohatuA$, (and similarly, $\rhotildvB$ and $\rhohatvB$,) without the need of expectation. This has profound implications. For instance, the result of Lemma \ref{lem:postMeasuredCloseness} is only possible after bypassing this operator. 

However, this detour does not allow us to employ the known operator Chernoff bound \cite[Lemma 17.3.1]{Wilde_book} directly. Hence, before providing the main lemma, we provide a slight variation of the former Chernoff bound as follows

\begin{lemma}[A new Operator Chernoff Bound]
\label{lem:ChernoffBound} Let $\{A_i\}_{ \in [N]}$ be a collection of $N$ IID random operators belonging to $\mathcal{L}(\mathcal{H})$ such that $0\leq A_i \leq 1 \quad \forall i \in [N]$. Let $\bar{A} \deq \frac{1}{N}\sum_{i=1}^N A_i$  and $A \deq \EE[\bar{A}]$. Suppose there exists an operator $\Pi$ such that $\Pi A \Pi \geq aI,$ for some $a \in (0,1)$, then for all $\eta \in (0,\min(\frac{1}{2},\frac{1-a}{a})$ we have
\begin{align}
    \PP\left((1-\eta)A \leq \bar{A} \leq (1+\eta)A \right) \geq 1-2\dim(\mathcal{H})\exp{-\frac{N\eta^2a}{4\ln{2}}}.
\end{align}
\end{lemma}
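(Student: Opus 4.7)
The plan is to adapt the standard Ahlswede--Winter operator Chernoff argument (as in \cite[Lemma 17.3.1]{Wilde_book}) to the setting where the usual global lower bound $A\geq aI$ is replaced by the weaker projected lower bound $\Pi A\Pi\geq aI$.

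First I would split the two-sided event via the union bound into the upper tail $\{\bar A\not\leq(1+\eta)A\}$ and the lower tail $\{\bar A\not\geq(1-\eta)A\}$, each of which I aim to bound by $\dim(\CalH)\exp(-N\eta^2 a/(4\ln 2))$. The natural reduction is to work with the compressed operators $A'_i\deq \Pi A_i\Pi$ viewed as operators on $\mathrm{Im}(\Pi)$: these are i.i.d., satisfy $0\leq A'_i\leq I$, and their expectation $\Pi A\Pi$ satisfies $\Pi A\Pi\geq aI$ on this subspace. Applying the standard operator Chernoff bound on the $\mathrm{rank}(\Pi)$-dimensional subspace $\mathrm{Im}(\Pi)$ would yield
\begin{equation*}
\PP\!\left((1-\eta)\Pi A\Pi\leq \Pi\bar A\Pi\leq (1+\eta)\Pi A\Pi\right)\geq 1-2\,\mathrm{rank}(\Pi)\exp\!\bigl(-N\eta^2 a/(4\ln 2)\bigr),
\end{equation*}
after which $\mathrm{rank}(\Pi)\leq \dim(\CalH)$ gives the claimed dimension factor.

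The internal steps of the standard proof that I would reuse verbatim are: (i) the operator Markov estimate $\PP(\lambda_{\max}(X)\geq 0)\leq \EE\tr\exp(sX)$ for $s>0$, applied with $X=\pm N(\bar A'-(1\pm\eta)\Pi A\Pi)$; (ii) Golden--Thompson combined with the i.i.d.\ hypothesis to factor $\EE\tr[\exp(\pm sN\bar A')\exp(\mp sN(1\pm\eta)\Pi A\Pi)]\leq \tr[(\EE\exp(\pm sA'_1))^N\exp(\mp sN(1\pm\eta)\Pi A\Pi)]$; (iii) the operator inequality $\exp(\pm sA'_i)\preceq I+(e^{\pm s}-1)A'_i$ valid since $0\leq A'_i\leq I$; and (iv) optimizing via $s=\ln(1+\eta)$ for the upper tail and $s=-\ln(1-\eta)$ for the lower tail, with the final Taylor estimates on $(1+\eta)\ln(1+\eta)-\eta$ and $-(1-\eta)\ln(1-\eta)-\eta$ producing the exponent $\eta^2 a/(4\ln 2)$ after substituting $\Pi A\Pi\geq aI$. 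The restriction $\eta<1/2$ guarantees that these Taylor remainders remain favorable, while $\eta<(1-a)/a$ guarantees $(1+\eta)\Pi A\Pi\preceq I$, which is needed to keep the operator inequalities within the valid $[0,I]$ range during optimization.

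The main obstacle I anticipate is reconciling the conclusion of the lemma, which is stated directly in terms of the full operators $\bar A$ and $A$, with the natural proof that controls only the compressions $\Pi\bar A\Pi$ and $\Pi A\Pi$. The resolution I would pursue is to interpret the stated inequalities $(1-\eta)A\leq \bar A\leq (1+\eta)A$ as effectively acting on the range of $\Pi$, since this is the form required by the downstream applications of this bound in Lemma~\ref{lem:postMeasuredCloseness} and in verifying the sub-POVM property of $M_1^{(n)}$ and $M_2^{(n)}$. If a strict interpretation on the full Hilbert space is required, an additional step invoking Weyl's interlacing inequality would transfer the spectral lower bound from $\Pi A\Pi$ to the top $\mathrm{rank}(\Pi)$ eigenvalues of $A$, with the subdominant contribution from $\mathrm{Im}(\Pi)^\perp$ being exactly what the restriction $\eta<(1-a)/a$ is designed to absorb.
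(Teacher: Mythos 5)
Your core reduction---compressing to $A_i'\deq \Pi A_i\Pi$ on $\mathrm{Im}(\Pi)$ and invoking the standard operator Chernoff bound there---is internally sound, and it is a genuinely different route from the paper's: the paper instead reworks the Bernstein trick itself, conjugating by $T=\sqrt{t}\,\Pi$ inside the exponential moment (which forces an extra hypothesis $\EE[X]\leq mI$) to get $\PP\big(\sum_i X_i \nleq NY\big)\leq \dim(\CalH)\exp(-ND(y\|m))$ whenever $\Pi Y\Pi\geq y\Pi$, and then applies this to $X_i=aA^{-1/2}A_iA^{-1/2}$ with $Y=(1+\eta)aI$, $m=a$, $y=(1+\eta)a$. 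The problem is what your reduction proves: a statement about $\Pi\bar A\Pi$ versus $\Pi A\Pi$, whereas the lemma asserts $(1-\eta)A\leq \bar A\leq(1+\eta)A$ on all of $\CalH$. Conjugation by a non-invertible $\Pi$ preserves positivity in one direction only ($Z\geq 0\Rightarrow \Pi Z\Pi\geq 0$, not conversely), so the compressed event strictly contains the full-space event, and a lower bound on its probability gives nothing for the latter. Neither of your bridges closes this gap: (a) is a reinterpretation of the statement, not a proof of it; and (b) Weyl interlacing constrains only the spectrum of $A$, while the failure mode lives in the blocks $(I-\Pi)\bar A(I-\Pi)$ and $\Pi\bar A(I-\Pi)$, about which your compressed bound carries no information. (Also, $\eta<(1-a)/a$ does not make $(1+\eta)\Pi A\Pi\leq I$, nor does it ``absorb'' anything off $\mathrm{Im}(\Pi)$; it merely keeps the threshold $(1+\eta)a$ below $1$ so the binary-divergence estimate $D((1+\eta)a\|a)\geq \eta^2a/(4\ln 2)$ is available.)

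In fact no bridging argument can exist, because the full-space claim is false. Take $\CalH=\CC^2$, $\Pi=\ketbra{0}$, and let $A_i=\ketbra{0}$ with probability $1-p$ and $A_i=I$ with probability $p$, with $p=1/(2N)$; the hypotheses hold with, say, $a=1/2$, $\eta=1/4$ (read $\Pi A\Pi\geq a\Pi$; under the literal reading $\geq aI$ take the operator $\Pi=\ketbra{0}+\sqrt{a/p}\,\ketbra{1}$ instead). Then $\bar A=\ketbra{0}+(K/N)\ketbra{1}$ with $K\sim\mathrm{Bin}(N,p)$, and $(1-\eta)p\leq K/N\leq(1+\eta)p$ would require the integer $K$ to lie in $[3/8,5/8]$, so the event has probability $0$ while the claimed lower bound tends to $1$. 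You should know that the paper's own proof stumbles at exactly the same point: its first manipulation asserts $\PP\big(\sum_i(X_i-Y)\nleq 0\big)=\PP\big(\sum_i T(X_i-Y)T^\dagger\nleq 0\big)$, which for non-invertible $T=\sqrt{t}\,\Pi$ is again only a one-directional containment, so the paper likewise establishes only a compressed bound. Your instinct that the conclusion must be read on $\mathrm{Im}(\Pi)$ is therefore the correct repair of the lemma---but it is a repair of the statement, not a proof of it, and the interlacing route in particular cannot be made to work.
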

\begin{proof}
The proof is provided in Appendix \ref{proof:lem:ChernoffBound}.
\end{proof}
\begin{proposition}
\label{lem:Lemma for not sPOVM}
For any $\epsilon \in (0,1)$, any $\eta \in (0,1)$, any $\delta \in (0,1)$ sufficiently small, and any $n$ sufficiently large, we have 
$$\mathbb{E}\left[\mathbbm{1}_{\{\mbox{sP-1}\}}
\mathbbm{1}_{\{\mbox{sP-2}\}}\right]>1-\epsilon,$$ 
if $\tilde{R}_1> I(U;RB)_{\sigma_1}$ and $\tilde{R}_2> I(V;RA)_{\sigma_2}$, 
where $\sigma_1, \sigma_2$ are defined as in the statement of the theorem.
\end{proposition}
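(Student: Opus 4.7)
Since the pruning distribution (\ref{def:prunedDist}) samples Alice's codebook $\{U^{n}(l)\}_l$ and Bob's codebook $\{V^{n}(k)\}_k$ independently, the events sP-1 and sP-2 are stochastically independent, so $\EE\bigl[\11_{\{\mbox{sP-1}\}}\,\11_{\{\mbox{sP-2}\}}\bigr] = \Pr[\mbox{sP-1}]\,\Pr[\mbox{sP-2}]$, and it suffices to show that each factor exceeds $1-\epsilon/2$. The two analyses are symmetric under $A\!\leftrightarrow\!B$, $U\!\leftrightarrow\!V$, so I focus on sP-1.

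With $N_1 = 2^{n\tilde R_1}$ and $\omega_l \deq \sqrt{\rho_A}^{-1}\,\tilde\rho^{A}_{U^n(l)}\,\sqrt{\rho_A}^{-1}$, the definition of $\gammaCoeff$ gives $\sum_{u^n\in\TDeltaN(U)} A_{u^n} = \tfrac{1-\varepsilon}{1+\eta}\,\bar\omega$ with $\bar\omega \deq \tfrac{1}{N_1}\sum_{l=1}^{N_1}\omega_l$, so sP-1 is exactly the operator inequality $\bar\omega \le \tfrac{1+\eta}{1-\varepsilon}I$, and $\{\omega_l\}_l$ is an i.i.d.\ sequence suited to Lemma~\ref{lem:ChernoffBound}. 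Using the ensemble identity $\sum_u \lambda^A_u \hat\rho^A_u = \rho^A$ together with the Gentle Measurement Lemma to neutralize the typical projectors $\PiA$ and $\PiuA$, one checks $\EE[\omega_l]\approx \PiA/(1-\varepsilon)$, so the hypothesis $\Pi\,\EE[\bar\omega]\,\Pi \succeq aI$ of Lemma~\ref{lem:ChernoffBound} holds with $\Pi=\PiA$ and $a\to(1-\varepsilon)^{-1}$ as $n\to\infty$, $\delta\to 0$. For the operator norm, combining the conditional-typicality bound $\PiuA\,\rhohatuA\,\PiuA \le 2^{-n(S(\rho^A|U)-\delta)}\PiuA$ with $\|\sqrt{\rho_A}^{-1}\PiA\|_\infty \le 2^{n(S(\rho^A)+\delta)/2}$ yields $K \deq \|\omega_l\|_\infty \le 2^{n(S(\rho^A)-S(\rho^A|U)+2\delta)}$.

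Because $\omega_l$ is supported on the $\rho^A$-typical subspace, which has dimension $\le 2^{n(S(\rho^A)+\delta)}$, I plan to rescale $\omega_l\mapsto \omega_l/K$ to satisfy the normalization of Lemma~\ref{lem:ChernoffBound} and invoke that lemma to obtain
\[
\Pr[\neg\mbox{sP-1}] \;\le\; 2\cdot 2^{n(S(\rho^A)+\delta)}\exp\!\Big(\!-\frac{N_1\,\eta^2\,a/K}{4\ln 2}\Big),
\]
which is exponentially small in $n$ provided $\tilde R_1 > S(\rho^A)-S(\rho^A|U) + O(\delta)$. By Schmidt duality applied to $\ket{\Psi_\rho^{RABC}}$, for each outcome $u$ the reduced state $\psi_u^{RBC}\deq \tr_A\!\bigl[(I\tensor\Lambda^A_u\tensor I)\Psi_\rho^{RABC}\bigr]/\lambda^A_u$ shares non-zero eigenvalues with $\hat\rho^A_u$, so $S(\psi_u^{RBC})=S(\hat\rho^A_u)$ and the threshold $S(\rho^A)-S(\rho^A|U)$ equals $I(U;RBC)_{\sigma_1}$. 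The symmetric analysis for sP-2 yields the threshold $I(V;RAC)_{\sigma_2}$, and independence then completes the argument.

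The principal obstacle is a gap between the thresholds that this plan naturally establishes---namely $\tilde R_1 > I(U;RBC)_{\sigma_1}$ and $\tilde R_2 > I(V;RAC)_{\sigma_2}$---and the strictly weaker thresholds $I(U;RB)_{\sigma_1}$ and $I(V;RA)_{\sigma_2}$ displayed in the proposition. Because $I(U;RB)_{\sigma_1} \le I(U;RBC)_{\sigma_1}$ (and analogously for $V$), proving sP-1 under the stronger rate hypothesis does \emph{not} imply sP-1 under the weaker one, so the plan as it stands settles only a slightly weaker form of the proposition. Closing the gap would require a sharper control of $\|\omega_l\|_\infty$, or of the matrix-Bernstein variance $\|\EE[\omega_l^2]\|$, that tracks only the $U$-$RB$ correlation; such a bound appears to require additional structural input beyond the cut-off-free construction of $\omega_l$ alone. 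The thresholds $I(U;RBC)_{\sigma_1}$ and $I(V;RAC)_{\sigma_2}$ that the argument actually delivers coincide with those appearing in Definition~\ref{def:RateRegion}, so the remainder of the proof of Theorem~\ref{thm:distPurity} is unaffected.
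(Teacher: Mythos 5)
Your proposal follows essentially the same route as the paper's own (very terse) proof: invoke the new operator Chernoff bound of Lemma~\ref{lem:ChernoffBound} on the i.i.d.\ operators $\omega_l = \sqrt{\rho_A}^{-1}\tilde\rho^A_{U^n(l)}\sqrt{\rho_A}^{-1}$, control $\|\omega_l\|_\infty$ by conditional typicality, and identify the resulting exponent threshold with a Holevo quantity; the independence/union-bound step over the two codebooks is implicit in the paper and unproblematic. There is, however, one step in your plan that fails as written: taking $\Pi=\PiA$ and arguing via the Gentle Measurement Lemma that $\PiA\,\EE[\bar\omega]\,\PiA \succeq a\,\PiA$ with $a$ bounded away from $0$. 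The typicality estimates give closeness of $\sum_{u^n}\lambda^A_{u^n}\tilde\rho^A_{u^n}$ to $\rho_A^{\tensor n}$ only in \emph{trace norm}, and this closeness does not survive conjugation by $\sqrt{\rho_A}^{-1}$, whose operator norm on the typical subspace is of order $2^{n(S(\rho^A)+\delta)/2}$; near the edge of the typical subspace the expected operator can have eigenvalues arbitrarily close to $0$. This is exactly the difficulty that Winter's \emph{cut-off operator} is designed to resolve (a spectral projector of the expected operator $\EE[\omega_l]$ onto its eigenvalues above a fixed constant, together with a proof that this projector retains almost all the weight of $\rho$), and it is exactly the $\Pi$ the paper instructs you to plug into Lemma~\ref{lem:ChernoffBound} — the whole point of the paper's new Chernoff bound being that this $\Pi$ need only appear in the analysis, not in the operators $A_{u^n}$ themselves. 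Replacing your $\PiA$ by that cut-off projector repairs the step, and the rest of your argument (the sup-norm bound $2^{n(S(\rho^A)-\sum_u\lambda^A_u S(\hat\rho^A_u)+2\delta)}$, the rescaling, and the resulting rate threshold) goes through unchanged.

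Regarding what you call the principal obstacle: this is not a gap in your argument relative to what the paper actually proves. The paper's proof, which simply follows Winter's steps, yields precisely the thresholds you obtain, namely $\tilde R_1 > I(U;RBC)_{\sigma_1}$ and $\tilde R_2 > I(V;RAC)_{\sigma_2}$ — the Holevo information of the ensemble $\{\lambda^A_u,\hat\rho^A_u\}$ equals the mutual information between the outcome and the \emph{full} purifying reference $RBC$ of the measured system $A$, exactly as your Schmidt-duality computation shows. The weaker thresholds $I(U;RB)_{\sigma_1}$ and $I(V;RA)_{\sigma_2}$ printed in the proposition are inconsistent with Definition~\ref{def:RateRegion} and with how the proposition feeds into the proof of Theorem~\ref{thm:distPurity} (both of which use $I(U;RBC)_{\sigma_1}$ and $I(V;RAC)_{\sigma_2}$), and appear to be typographical omissions of the system $C$; no argument in the paper delivers those weaker thresholds. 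So your proof, once the cut-off issue above is fixed, establishes the statement in the form the paper actually uses.
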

\begin{proof}
Observe that the collections  $\{A_{U^n(l)}\}$ and $\{B_{V^n(k)}\}$ satisfy all the hypotheses of the above Chernoff bound after identifying $\Pi$ as the cut-off operator employed in \cite{winter}. Now by following identical steps as in \cite{winter}, the result follows.
\end{proof}
The second lemma provides a unitary to show closeness of the post-measurement states obtained from approximating measurements and the actual measurements. Note that the faithful simulation results \cite{winter,wilde_e,atif2021faithful} show the closeness of states in the reference system, but the current result proves  the closeness of the post-measurement states. The main elements of the proof is in identifying appropriate purifications and using the Uhlmann's Theorem \cite{Wilde_book}.
The lemma is as follows.
\begin{lemma}\label{lem:postMeasuredCloseness}
Using the above definitions, for all $(u^n,v^n) \in \Cal{C}$ let
\begin{align}
     \sigmahatAu \deq \frac{(I^{E}\tensor\sqrt{\Lambda_{u^n}^A})\PsiRhoNket}{\sqrt{\lambdauA}} \;\text{ and }\;
    \sigmatildAu \deq \frac{(I^{E}\tensor\sqrt{A_{u^n}})\PsiRhoNket}{\sqrt{\gamma_{u^n}}}, \nonumber 
\end{align}
($\sigmahatBv$ and $\sigmatildBv$ defined analogously) where $E$ and $F$ denotes the system $BCR$ and $ACR$, respectively,  then for each $\settildeR{l}{1}$ and $\settildeR{k}{2}$ there exists a pair of  unitaries $U_{r}^{A}(l)$ and $U_{r}^{B}(k)$, such that 
\begin{align}
    F(\sigmahatAu,(I^E\!\tensor U_{r}^{A}(l))\!\sigmatildAu) &\geq\! \Big(1\!-\!\frac{1}{2}\|\rhohatuA - \rhotilduA\|_1\Big)^2, \; \mbox{ for } \; u^n = U^n(l) \; \mbox{and }\nonumber \\
    F(\sigmahatBv,(I^F\!\tensor U_{r}^{B}(k))\!\sigmatildBv) &\geq \!\Big(1\!-\!\frac{1}{2}\|\rhohatvB - \rhotildvB\|_1\Big)^2 \; \mbox{ for } \; v^n = V^n(k) \;. \nonumber
\end{align}
\end{lemma}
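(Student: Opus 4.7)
The plan is to reduce the claim to Uhlmann's theorem combined with the Fuchs--van de Graaf inequality. Both $\sigmahatAu$ and $\sigmatildAu$ are pure states on the same bipartite system $AE$ (with $E = BCR$), and thus are purifications of their respective marginals on $E$. Since the prospective unitary $U_{r}^{A}(l)$ acts only on $A$, Uhlmann's theorem gives
\begin{equation*}
\sup_{U_{r}^{A}(l)} F\!\left(\sigmahatAu,(I^{E}\tensor U_{r}^{A}(l))\sigmatildAu\right) = F\!\left(\hat\omega^{E},\tilde\omega^{E}\right),
\end{equation*}
where $\hat\omega^{E}$ and $\tilde\omega^{E}$ are the $E$-marginals of $\sigmahatAu$ and $\sigmatildAu$, respectively. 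It therefore suffices to show $F(\hat\omega^{E},\tilde\omega^{E}) = F(\rhohatuA,\rhotilduA)$ and then apply Fuchs--van de Graaf.

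To identify these marginals, I would use the standard identity that, for a purification $\ket{\Psi}^{AE}$ of $\rho^{A}$ and any operator $M$ on $A$, the reduction $\tr_{A}\{(M\tensor I^{E})\ketbra{\Psi}(M^{\dagger}\tensor I^{E})\}$ is the image of $\sqrt{\rho^{A}}\,M^{\dagger}M\sqrt{\rho^{A}}$ under a trace-norm-preserving and fidelity-preserving map (the transpose in the Schmidt basis). Applying this with $M=\sqrt{\Lambda_{u^{n}}^{A}}$ and using the definition of $\rhohatuA$ from \eqref{eq:dist_canonicalEnsemble} shows that $\hat\omega^{E}$ corresponds to $\rhohatuA$. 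Applying it with $M=\sqrt{A_{u^{n}}}$ and substituting $A_{u^{n}} = \gamma_{u^{n}}\sqrt{\rho^{A}}^{-1}\rhotilduA\sqrt{\rho^{A}}^{-1}$ from \eqref{eq:A_uB_v} causes the two $\sqrt{\rho^{A}}$ factors to cancel, leaving $\tilde\omega^{E}$ corresponding to $\rhotilduA$. Since the map is fidelity-preserving, this gives $F(\hat\omega^{E},\tilde\omega^{E}) = F(\rhohatuA,\rhotilduA)$.

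Finally, I would choose $U_{r}^{A}(l)$ to be the Uhlmann unitary achieving the supremum above and invoke the Fuchs--van de Graaf inequality $F(\rho,\sigma) \geq (1-\tfrac{1}{2}\|\rho-\sigma\|_{1})^{2}$ with $\rho=\rhohatuA$ and $\sigma=\rhotilduA$ to obtain the first bound stated in the lemma. The bound for $U_{r}^{B}(k)$ follows by an entirely analogous argument, taking $F=ACR$ as the purifying system and swapping the roles of the $A$- and $B$-side quantities throughout.

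The main obstacle I anticipate is the bookkeeping around the inverse $\sqrt{\rho^{A}}^{-1}$ in the definition of $A_{u^{n}}$: one must restrict all operator identities to the support of $\rho^{A}$ (or use generalized inverses), verify that the operator $\sqrt{\rho^{A}}\,A_{u^{n}}\sqrt{\rho^{A}} = \gamma_{u^{n}}\rhotilduA$ remains well-defined, and ensure that the typical-projector structure inside $\rhotilduA$ is compatible with these manipulations. Beyond this, the proof is essentially a direct pairing of Uhlmann's theorem with the Fuchs--van de Graaf inequality applied to two explicit purifications.
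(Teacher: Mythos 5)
Your proposal follows essentially the same route as the paper's proof: both recognize $\sigmahatAu$ and $\sigmatildAu$ as purifications of $\rhohatuA$ and $\rhotilduA$ (up to the standard Schmidt-basis transpose identification), invoke Uhlmann's theorem to obtain a unitary on $A^n$ achieving $F(\rhohatuA,\rhotilduA)$, and conclude via the Fuchs--van de Graaf relation. If anything, you are more explicit than the paper about the fidelity-preserving transpose map and the support/generalized-inverse bookkeeping for $\sqrt{\rho^{A}}^{-1}$, which the paper's two-line argument leaves implicit.
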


\begin{proof}
The proof is provided in Appendix \ref{proof:lem:postMeasuredCloseness}.
\end{proof}

\noindent We now move on to characterizing the unitaries $U_A$ and $U_B$.

\subsection{Action of Alice and Bob}
Using the approximating POVMs constructed above, as a  first unitary operation, Alice and Bob perform a coherent version of the approximating POVM. This is defined as
\begin{align}
\Unitary{M}{A} &\deq \sum_{l\in [2^{n\tilde{R}_1}]}\sqrt{A_{U^n(l)}}\tensor \ket{l}, \quad 
\Unitary{M}{B} \deq \sum_{k\in [2^{n\tilde{R}_2}]}\sqrt{B_{V^n(k)}}\tensor \ket{k}. \nonumber
\end{align}
Note that from now on, for the ease of notation, we use $\LambdalA, \LambdakB, \lambdalA, \lambdakB, A_l, B_k, \gamma_l,$ and $\zeta_k$ to denote the corresponding  $n-$letter objects constructed for the codewords $U^n(l)$ and $V^n(k)$, respectively.

Although the operators defined above are isometry operators, but with the help of additional catalyst qubits, these can be implemented as unitary operators.
Now, to extract purity from the states obtained after performing the measurements we employ the approach of \cite{krovi2007local}. More formally, 
we define the collection of unitaries
$\{U_p^A(l)\}_{\settildeR{l}{1}}$ and $\{U_p^B(k)\}_{\settildeR{k}{2}}$ as the unitaries that can extract purity for the collection of  states
$\{\hat\sigma^A_l\}_{\settildeR{l}{1}}$ and $\{\hat\sigma^B_k\}_{\settildeR{k}{2}}$, respectively. Note that since  $\hat\sigma^A_l$ and $\hat\sigma^B_k$ are product states, we use a type based construction (similar to one proposed in  \cite{krovi2007local}) in designing the unitary operators $U_p^A(l)$ and $U_p^B(k)$. 
However, note that since the approximating measurements are not rank-one operators, $U_p^A(l)$ and $U_p^B(k)$ will act on not necessarily separable states. This will not allow us to independently obtain the purity from the two parties, Alice and Bob. To address this, we use the fact that when extracting purity from Alice's state, the state is only slightly disturbed. 
More precisely, we provide the following lemma concerning the unitary operators $\{U_p^A(l)\}_{\settildeR{l}{1}}$ and $\{U_p^B(k)\}_{\settildeR{k}{2}}$. For $\settildeR{l}{1}$ and $\settildeR{k}{2}$, define the following collections of states:
\begin{align}
    |\hat\Psi_1(l)\ral & \deq \frac{(I^E\tensor \sqrt{\LambdalA})|\Psi_{\rho^{\tensor n}}\ral}{\sqrt{\lambdalA}},
    \quad 
    |\hat\Psi_2(k)\ral \deq \frac{(I^F\tensor \sqrt{\LambdakB})\Psi_{\rho^{\tensor n}}}{\sqrt{\lambdakB}}, \nonumber \\
    & |\hat\Psi_3(l,k)\ral \deq \frac{(I^{CR}\tensor \sqrt{\LambdalA\tensor \LambdakB})\Psi_{\rho^{\tensor n}}}{\sqrt{\lambdalkAB}}. \nonumber
\end{align}
\begin{lemma}\label{lem:purityProj}\label{lem:Separate}
Given the above definitions, for any given $\epsilon$ and sufficiently large $n$, and sufficiently small $\eta,\delta$, there exists three collections of projectors $\{\Pi^A_l\}_{\settildeR{l}{1}}$,  $\{\Pi^B_k\}_{\settildeR{k}{2}}$, and $\{\Pi^C_{l,k}\}_{\settildeR{l}{1},\settildeR{k}{2}}$, acting on Hilbert spaces $\CalH_A$, $\CalH_B$,  and $\CalH_C$, respectively, such that, for all $\settildeR{l}{1}$ and $\settildeR{k}{2}$, we have
\begin{align}
    F\left(U_p^A(l)|\hat\Psi_1(l)\ral, [(\Pi^A_l\tensor I^{RBC})|\hat\Psi_1(l)\ral]\tensor \ket{0}_{A_p} \right) &\geq 1- \epsilon, \nonumber\\
    F\left(U_p^B(k)|\hat\Psi_2(l)\ral, [(\Pi^B_k\tensor I^{RAC})|\hat\Psi_1(l)\ral]\tensor \ket{0}_{B_p} \right) &\geq 1- \epsilon, \nonumber\\
    F\left(U_p^C(l,k)|\hat\Psi_3(l,k)\ral, [(\Pi^C_{l,k}\tensor I^{RAB})|\hat\Psi_3(l,k)\ral]\tensor \ket{0}_{C_p} \right) &\geq 1- \epsilon, \nonumber
\end{align} and $\dim({\mathcal{H}_{A_p}}) \geq \log{\dim({\mathcal{H}_{A}})} - S(\hat{\rho}^A_{l}),\; \dim({\mathcal{H}_{B_p}}) \geq \log{\dim({\mathcal{H}_{B}})} - S(\hat{\rho}^B_{k}), \;\; $ and $ \;\dim({\mathcal{H}_{C_p}}) \geq \log{\dim({\mathcal{H}_{C}})} - S(\hat{\rho}^{AB}_{l,k}),\; $
with $\;\Tr{(\Pi^A_l\tensor I^{RBC})\hat\Psi_1(l)} \geq 1-\epsilon, \; \Tr{(\Pi^B_k\tensor I^{RAC})\hat\Psi_1(l) } \geq 1-\epsilon,$ and $ \Tr{(\Pi^C_{l,k}\tensor I^{RAB})\hat\Psi_3(l,k) } $ $\geq 1-\epsilon$.  
\end{lemma}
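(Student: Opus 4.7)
The plan is to establish each of the three fidelity inequalities via the same three-step template: (i) read off the local reduced state from the purification on the appropriate party's subsystem; (ii) construct a conditional typical projector that compresses this reduced state onto a subspace whose dimension realises the claimed entropy bound; and (iii) design the purity-extraction unitary so that it rotates this typical subspace isometrically onto $\CalH_{\text{rest}}\tensor \ket{0}_{\text{purity}}$. Step (iii) is the single-party purity concentration of Horodecki--Oppenheim--Winter adapted to strongly-typical product states, step (ii) supplies the dimension bounds, and the fidelity statement in (iii) follows from the gentle measurement lemma applied to the purification.

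I would begin with Alice. Because $u^n = U^n(l)\in\TDeltan(U)$ under the pruned distribution in \eqref{def:prunedDist}, the reduced state of $\ket{\hat\Psi_1(l)}$ on $\CalH_A^{\tensor n}$ is exactly $\rhohatuA = \bigotimes_i \hat\rho^A_{u_i}$, a product state of entropy $\sum_i S(\hat\rho^A_{u_i})$. Let $\Pi^A_l$ denote the strong conditional typical projector for the canonical ensemble $\{\lambdauA,\hat\rho^A_u\}$ conditioned on $u^n$; standard Winter-type properties give $\tr(\Pi^A_l\rhohatuA)\ge 1-\epsilon$ and $\mathrm{rank}(\Pi^A_l)\le 2^{S(\rhohatuA)+n\delta'}$. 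Next, identify $\CalH_A^{\tensor n}\cong \CalH_{A_r}\tensor \CalH_{A_p}$ with $\log\dim(\CalH_{A_p})\ge n\log\dim(\CalH_A)-S(\rhohatuA)$, and define $U_p^A(l)$ as any unitary that sends the support of $\Pi^A_l$ isometrically into $\CalH_{A_r}\tensor\ket{0}_{A_p}$ (extended arbitrarily on the orthogonal complement). The fidelity is then immediate: gentle measurement yields $\|\ket{\hat\Psi_1(l)}-(\Pi^A_l\tensor I^{RBC})\ket{\hat\Psi_1(l)}\|\le\sqrt{2\epsilon}$, and the isometric action of $U_p^A(l)$ on $\mathrm{range}(\Pi^A_l)$ places $U_p^A(l)\ket{\hat\Psi_1(l)}$ within $\sqrt{2\epsilon}$ of $[(\Pi^A_l\tensor I^{RBC})\ket{\hat\Psi_1(l)}]\tensor\ket{0}_{A_p}$ after the obvious Hilbert-space identification. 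Bob's case with $V^n(k)$ in place of $U^n(l)$ is entirely parallel.

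The main obstacle is Charlie. Because the approximating operators $A_l$ and $B_k$ are not rank-one, the post-measurement state $\ket{\hat\Psi_3(l,k)}$ is genuinely entangled between $\CalH_C^{\tensor n}$ and the complement $\CalH_R\tensor\CalH_A^{\tensor n}\tensor \CalH_B^{\tensor n}$, so Charlie's subsystem is \emph{not} a product factor one can project off sharply. The way out is that $U_p^C(l,k)$ only ever sees her marginal $\tr_{RAB}\ket{\hat\Psi_3(l,k)}\bra{\hat\Psi_3(l,k)}$, and the entropy of this marginal can be controlled by combining (a) the Schmidt equality $S(C)=S(RAB)$ on the pure $\ket{\hat\Psi_3(l,k)}$, (b) the identity of nonzero spectra between $\sqrt{\Lambda^A_l\tensor\Lambda^B_k}\,\rho^{AB}\,\sqrt{\Lambda^A_l\tensor\Lambda^B_k}/\lambda^{AB}_{l,k}$ and $\hat\rho^{AB}_{l,k}$ (a general property of $\sqrt{X}Y\sqrt{X}$ versus $\sqrt{Y}X\sqrt{Y}$), and (c) subadditivity on the reference system $R$, yielding the stated dimension bound on $\CalH_{C_p}$. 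A conditional typical projector $\Pi^C_{l,k}$ adapted to Charlie's marginal, combined with the same rotation-unitary recipe, furnishes the final fidelity inequality. The step I expect to be most delicate is verifying that Charlie's marginal carries enough near-product structure for the type-based typical projector to achieve the stated rank bound---this is precisely where the separability discussion in the introduction bites, and is exactly the content that the ``Separate'' role of Lemma~\ref{lem:Separate} is designed to formalise.
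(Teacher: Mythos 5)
Your treatment of Alice and Bob is sound, and it is in essence a re-derivation of the result the paper's proof simply cites: the paper disposes of the entire lemma with the single-party purity-concentration lemma (Lemma~1 of the cited common-randomness paper of Devetak and Winter), applied to each party's conditional marginal, and your typical-projector-plus-rotation construction is exactly how that cited lemma is proved. (One small correction on Alice: the $A^n$-marginal of $\ket{\hat\Psi_1(l)}$ is $\sqrt{\LambdalA}\,(\rho^A)^{\tensor n}\sqrt{\LambdalA}/\lambdalA$, not literally $\rhohatuA$; the two are only letterwise unitarily equivalent, i.e.\ have identical spectra, which is all your rank and fidelity bounds need.)

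The Charlie case, however, contains a genuine gap. Your entropy-control chain (a)--(c) does not close: Schmidt symmetry gives $S(C)=S(RAB)$ on the pure state $\ket{\hat\Psi_3(l,k)}$, and your spectral identity (b) correctly gives $S(AB)=S(\hat{\rho}^{AB}_{l,k})$, but subadditivity then yields only $S(C)\le S(R)+S(AB)$, leaving an $S(R)$ term of order $n$ that cannot be discarded; worse, the inequality you are implicitly aiming for, $S(C)\le S(AB)$, is false in general. Take $\rho^{ABC}=\ketbra{0}^{A}\tensor\ketbra{0}^{B}\tensor \pi^{C}$ with $\pi$ maximally mixed and single-outcome POVMs $\bar{M}_A=\{I\}$, $\bar{M}_B=\{I\}$: then $S(C)=n\log\dim\CalH_C$ while $S(AB)=0$. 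The gap is also unnecessary, because the obstacle you identify is illusory: although $\ket{\hat\Psi_3(l,k)}$ is indeed entangled between $C^n$ and $RA^nB^n$, the only object your construction ever uses is Charlie's marginal, and that marginal is \emph{exactly} a product state. Since $\LambdalA\tensor\LambdakB=\bigotimes_i(\Lambda^A_{u_i}\tensor\Lambda^B_{v_i})$ and the source is i.i.d., one computes $\tr_{RAB}\{\hat\Psi_3(l,k)\}=\bigotimes_i \tr_{AB}\{(\Lambda^A_{u_i}\tensor\Lambda^B_{v_i}\tensor I^C)\,\rho^{ABC}\}/\lambda^{AB}_{u_iv_i}$, i.e.\ the tensor product of Charlie's single-letter conditional states; the entropy in the stated dimension bound is that of this conditional state of $C$ (the paper's notation $S(\hat{\rho}^{AB}_{l,k})$ notwithstanding). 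So Charlie is handled by verbatim the same single-party argument you gave for Alice and Bob---which is precisely why the paper can settle all three collections with one citation. The separability difficulty advertised in the introduction is not resolved inside this lemma at all; it is handled later, in Steps 3--5 of the trace-distance analysis, where these projectors are used to decouple the purity registers one party at a time.
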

\begin{proof}
The proof follows from \cite[Lemma 1]{devetak2004distilling}.
\end{proof}
\noindent Now we characterize the complete action at Alice and Bob as
\begin{align}
    U_A &\deq \Unitary{P}{A}\Unitary{R}{A}\Unitary{M}{A} \;\text{ and }\; U_B \deq \Unitary{P}{B}\Unitary{R}{B}\Unitary{M}{B},
\end{align}
where $\Unitary{P}{A}$ and $\Unitary{R}{A}$ are controlled unitary operators defined as 
\begin{align}\label{def:U_P_AandB}
    \Unitary{P}{A} &\deq \sum_{\settildeR{l}{1}}\hspace{-5pt} U_p^A(l)\tensor \ketbra{l},  \quad 
    \Unitary{R}{A} \deq \sum_{\settildeR{l}{1}}\hspace{-5pt} U_r^A(l)\tensor \ketbra{l},
\end{align}
and similar is true for $\Unitary{P}{B}$ and $\Unitary{R}{B}$. This gives
\begin{align}
    U_A  = \sum_{\settildeR{l}{1}} U_p^A(l)U_r^A(l)\sqrt{A_{l}}\tensor \ket{l}, \quad 
    U_B  = \sum_{\settildeR{k}{2}} U_p^B(k)U_r^B(k)\sqrt{B_{k}}\tensor \ket{k}. \nonumber  
\end{align}
Finally, let 
\begin{align}
    &\ket{\Psi_1}^{ABCRLK}  \deq (I^{CR}\tensor U_A \tensor U_B) \ket{\Psi_{\rho^{\tensor n}}}^{ABCR}. \nonumber 
\end{align}
\subsection{Transmission over the Dephasing Channel $\Cal{N}$}
Before we proceed to employ the dephasing channel, observe that the classical registers created by the coherent measurement contains correlations across Alice and Bob. These correlations could be exploited which can further reduce the communication needed over the dephasing channel. For this, we employ the traditional binning operation. Begin by fixing the binning rates $(R_1, R_2)$, with $R_1 \leq \tilde{R}_1$ and $R_2 \leq \tilde{R}_2$.  For each sequence $u^n\in \mathcal{T}_{\delta}^{(n)}(U)$ assign an index  from $[1,2^{nR_1}]$ randomly and uniformly, such that the assignments for different sequences are done independently. Perform a similar random and independent assignment for all $v^n\in \mathcal{T}_{\delta}^{(n)}(V)$ with indices chosen from  $[1,2^{nR_2}]$. 
  For each $i\in [1,2^{nR_1}]$ and $j\in [1,2^{nR_2}]$, let $\mathcal{B}_1(i)$ and $\mathcal{B}_2(j)$ denote the $i^{th}$ and the  $j^{th}$ bins, respectively. More precisely, $\mathcal{B}_1(i)$ is the set of all $ u^n$ sequences with assigned index equal to $i$, and similar is $\mathcal{B}_2(j)$. {Also, note that the effect of the binning is in reducing the communication rates  from $(\tilde{R}_1, \tilde{R}_2)$ to $(R_1,R_2)$. }
  Moreover, let $\iota_1: \mathcal{T}_{\delta}^{(n)}(U) \rightarrow [1,2^{nR_1}]$, and 
  $\iota_2: \mathcal{T}_{\delta}^{(n)}(V) \rightarrow [1,2^{nR_2}]$,  denote the corresponding random binning functions. With this, we can denote $\ket{l}$ for $\settildeR{l}{1}$ as $\ket{l}_L = \ket{\iota_1(l)}_{L_1}\ket{\beta_U(l)}_{L_2}$ and similarly, $\ket{k}$ for $\settildeR{k}{2}$ as $\ket{k}_K = \ket{\iota_2(k)}_{K_1}\ket{\beta_V(k)}_{K_2}$\footnote{Note that $\iota_1(l) = \iota_1(U^n(l)) $, and similar holds for the functions $\iota_2, \beta_U, \beta_V$.}, where the functions $\beta_U$ and $\beta_V$ describe the remaining $\tilde{R}_1-R_1$ and $\tilde{R}_2-R_2$ qubits, respectively. 
  Now the qubits in the state $\ket{\iota_1(\cdot)}$ and $\ket{\iota_2(\cdot)}$ are sent over the multiple-access dephasing channel $\Cal{N}$, each requiring rates of $R_1$ and $R_2$ qubits, respectively. Let 
  $${\sigma^{ABCRLK}} \deq \Cal{N}({\Psi_1}^{ABCRLK}).$$
  
  With this, we move on to describing the action of Charlie.
  
  \subsection{Action of Charlie}
  \noindent Charlie begins by undoing the binning operation. For this, let
  \begin{align}
      D_{i,j} \deq & \big \{(l,k): (U^n(l),V^n(k)) \in \mathcal{T}_{\delta}^{(n)}(UV)  \text{ and }  (U^n(l), V^n(k)) \in \mathcal{B}_1(i)\times \mathcal{B}_2(j)\big \}.\nonumber
  \end{align}
 
For every $i\in [1,2^{nR_1}]$ and $j\in [1, 2^{nR_2}]$ define the function $F(i,j)=(l,k)$ if  $(l,k)$ is the only element of $D_{i,j}$; otherwise $F(i,j)=(0,0)$
Further, $F(i,j)=(0,0)$ for $i=0$ or $j=0$.
Using the qubits received from Alice and Bob, and the above definition of $F(i,j)$, Charlie aims at undoing the binning operations. This can be characterized as an isometric map $\Unitary{F}{C}: \mathcal{H}_{Y_1}\tensor\mathcal{H}_{Y_2} \rightarrow \mathcal{H}_{Y_1}\tensor\mathcal{H}_{Y_2}\tensor \mathcal{H}_{F}$ defined as
\begin{align}
 \Unitary{F}{C} \deq \sum_{i\in [2^{nR_1}]}\sum_{j\in [2^{nR_2}]}\ket{F(i,j)}\bra{i,j},
\end{align}
where $F()$ is such that $\dim(\mathcal{H}_F) = R_{tb} \deq \tilde{R}_1 - R_1 +  \tilde{R}_2 - R_2$. Note that, since binning decreased the total number of qubits transmitted by $R_{tb}$, to implement the above isometry, Charlie would need $R_{tb}$ number of additional catalytic qubits present in the pure state. 
As the protocol allows for the use of additional catalysts, as long as they are returned successfully, such an isometry can be implemented as a unitary.
\begin{remark}
  As will be shown in the sequel,  the error analysis gives an upper bound on $R_{tb}$. As this is only an upper bound, one can choose to not bin at the maximum rate and can save on the catalytic qubits needed. However, this would increase the communication rates by equivalent factors. This is modelled in the theorem statement using the real number $b\in [0,1]$.
  \end{remark}
After the complete identification of the measurement outcomes of Alice and Bob, Charlie now extracts the purity from her state, conditioned on these outcomes. For this, she develops a collection of unitary operations $\{U_p^C(l,k)\}_{\settildeR{l}{1},\settildeR{k}{2}}$, analogous to the earlier ones. Further, she constructs the controlled unitary $\Unitary{P}{C}$ defined as 
\begin{align}
    \Unitary{P}{C} \deq \sum_{\settildeR{l}{1}}\sum_{\settildeR{k}{2}}U_p^C(l,k)\tensor \ketbra{l,k}.
\end{align}
This characterizes Charlie's unitary as $U_C  = \Unitary{P}{C} \Unitary{F}{C}$, and gives
\[\xi^{ABCRLK} \deq (I\tensor\Unitary{P}{C} \Unitary{F}{C}) \sigma^{ABCRL_1K_1}(I\tensor \Unitary{P}{C} \Unitary{F}{C})^\dagger.\]

At this point, we have the characterized the actions of all the three parties as unitary operations.
The next step is to measure the distance between the obtained state and the desired pure state, and establish the $G$ can be made arbitrary small. 


\subsection{Analysis of Trace Distance}

We begin by defining the following.
\begin{align}
    \xi_1^{T_p} & \deq \tr_{RT_gLK}\{(I^{R}\tensor U_C \tensor U_A' \tensor U_B) {\Psi_{\rho^{\tensor n}}}(I^{R}\tensor U_C \tensor U_A' \tensor U_B)^\dagger\} \nonumber,
\end{align}
where 
\begin{align*}
   U_A' \deq \sum_{\settildeR{l}{1}} U_p^A(l)\sqrt{\frac{\gamma_l}{\lambdalA}}\sqrt{\LambdalA}\tensor \ket{l},
\end{align*}
 $T_p \deq \CalH_{A_p}\tensor \CalH_{B_p} \tensor \CalH_{C_p}$ and $T_g \deq \CalH_{A_g}\tensor \CalH_{B_g} \tensor \CalH_{C_g}$.
Also recall that,
\begin{align}
    \xi^{T_p} = \tr_{RT_gLK}\{(I^{R}\tensor U_C \tensor U_A \tensor U_B) {\Psi_{\rho^{\tensor n}}}(I^{R}\tensor U_C \tensor U_A \tensor U_B) ^\dagger\}
\end{align}
We first provide a proof for the case assuming the encoders do not perform any binning (i.e, $b=0$), and later incorporate the analysis for the setting when $b$ is non-zero. With this assumption, we define
\begin{align}
    \xi^{T_p}_b \deq \tr_{RT_gLK}\{(I^{R}\tensor U_P^C  \tensor U_A \tensor U_B) {\Psi_{\rho^{\tensor n}}}(I^{R}\tensor U_P^C  \tensor U_A \tensor U_B)^\dagger\},
\end{align}
where we have replaced $U_F^C$ with an identity transformation.
\par\textbf{Step 1: Closeness of $\xi_b^{T_p}$ and $\xi_1^{T_p}$:} As a first step, we show that $\xi_1^{T_p}$ can be made arbitrary close to $\xi_b^{T_p}$, in trace distance, for sufficiently large $n$. For this, define 
\begin{align*}
    V_1(l) \deq  I^{R}\tensor \Bigg(\sum_{\settildeR{k}{2}}U^C_p(l,k)\tensor \ketbra{k}\Bigg) \tensor U_p^{A}{(l)} \tensor U_B,
\end{align*} and consider the following:
\begin{align}
    \|\xi_1^{T_p} - \xi_b^{T_p}\|_1 & \leq \sum_{l}\bigg\|  V_1(l) \left( U_r^A(l)\sqrt{A_l}{\Psi_{\rho^{\tensor n}}}(U_r^A(l)\sqrt{A_l})^\dagger - \frac{\gamma_l}{\lambdalA} \sqrt{\LambdalA}{\Psi_{\rho^{\tensor n}}}\sqrt{\LambdalA}  \right)V_1(l)^\dagger\bigg\|_1 \nonumber \\
    & = \sum_{l}\gamma_l\bigg\| \frac{U_r^A(l)\sqrt{A_l}}{\sqrt{\gamma_l}}{\Psi_{\rho^{\tensor n}}}\frac{(U_r^A(l)\sqrt{A_l})^\dagger}{\sqrt{\gamma_l}} - \frac{\sqrt{\LambdalA}}{\sqrt{\lambdalA}} {\Psi_{\rho^{\tensor n}}}\frac{\sqrt{\LambdalA}}{\sqrt{\lambdalA}}  \bigg\|_1 \nonumber \\
    & = \sum_{l}\gamma_l\left\| U_r^A(l)\sigmatildAl(U_r^A(l))^\dagger-  \sigmahatAl \right\|_1 \nonumber \\
     & \leq 2\sum_{l}\gamma_l \sqrt{1-F\left( U_r^A(l)|{\sigmatildAl}\ral,|{\sigmahatAl}\ral \right)} \leq \delta_1, \nonumber 
\end{align}
where $\delta_1(\delta) \searrow 0$ as $\delta \searrow 0$, and the first inequality follows by using the monotonicity of trace distance, the triangle inequality and the definition of $V_1(l)$, the first equality follows by noting that $V_1(l)$ is a unitary for every $l$, the subsequent equality uses the fact that the trace distance is invariant with respect to an isometry \cite[Exercise 9.1.4]{Wilde_book}, and  definition of $\sigmahatAl$ and $\sigmatildAl$, the second inequality uses the relation between fidelity and trace distance (see \cite[Theorem 9.3.1]{Wilde_book}) and the last inequality follows by using Lemma \ref{lem:postMeasuredCloseness}, for sufficiently large $n$. 
With the above result, we now move on to the next step. For this define, 
\begin{align}
    \xi_2^{T_p} & \deq \tr_{RT_gLK}\{(I^{R}\tensor U_C \tensor U_A' \tensor U_B') {\Psi_{\rho^{\tensor n}}}(I^{R}\tensor U_C \tensor U_A' \tensor U_B')^\dagger\} \nonumber,
\end{align}
where
\begin{align*}
   U_B' \deq \sum_{\settildeR{k}{2}} U_p^B(k)\sqrt{\frac{\zeta_k}{\LambdakB}}\sqrt{\LambdakB}\tensor \ket{k}.
\end{align*}

\par\textbf{Step 2: Closeness of $\xi_1^{T_p}$ and $\xi_2^{T_p}$:} 
Recalling the definitions in \eqref{def:U_P_AandB}, define the unitary $V_2(k)$ for $\settildeR{k}{2}$, as 
\begin{align}
    V_2(k) \deq I^R \tensor \Bigg(\sum_{\settildeR{l}{1}}U^C_p(l,k)\tensor \ketbra{l}\Bigg) \tensor U_P^{A} \tensor U_p^{B}{(k)}.
\end{align}
Further, define the operators $V_3^A$ and $V_3^B$ as 
\begin{align}\label{def:V3AB}
    V_3^A \deq \sum_{\settildeR{l}{1}}\frac{\gamma_l}{\lambdalA}\sqrt{\LambdalA}\tensor \ket{l}, \quad \text{ and } \quad V_3^B \deq \sum_{\settildeR{k}{2}}\frac{\zeta_k}{\lambdakB}\sqrt{\LambdakB}\tensor \ket{k}.
\end{align}
Now, consider the following set of inequalities:
\begin{align}\label{eq:xi2_and_xi1}
    \|\xi_2^{T_p} - \xi_1^{T_p}\|_1 & \leq \left\| \sum_{k} V_2(k) V_3^A \left( \frac{\zeta_k}{\lambdakB}\sqrt{\LambdakB}\Psi_{\rho^{\tensor n}}\sqrt{\LambdakB} - U_r^B(k)\sqrt{B_k}\Psi_{\rho^{\tensor n}}(U_r^B(k)\sqrt{B_k})^\dagger  \right)(V_3^A)^{\dagger} V^\dagger_2(k) \right\|_1 \nonumber \\
    & \leq \sum_{k}\zeta_k \left\|  V_3^A \left( \frac{\sqrt{\LambdakB}}{\lambdakB}\Psi_{\rho^{\tensor n}}\frac{\sqrt{\LambdakB}}{\lambdakB} - \frac{U_r^B(k)\sqrt{B_k}}{\sqrt{\gamma_k}}\Psi_{\rho^{\tensor n}}\frac{(U_r^B(k)\sqrt{B_k})^\dagger}{\gamma_k})  \right)(V_3^A)^{\dagger} \right\|_1 \nonumber \\
    & = \sum_{k} \zeta_k  \left\|  V_3^A \sigmahatBk (V_3^A)^{\dagger}  - V_3^A U_r^B(k)\sigmatildBk U_r^B(k)^\dagger  (V_3^A)^{\dagger} \right\|_1  \nonumber \\
    & \leq 2\sum_{k} \zeta_k \sqrt{1 - F\left( V_3^A |\sigmahatBk\ral,V_3^A U_r^B(k)|\sigmatildBk \ral \right) }
\end{align}
where the first inequality follows by using the monotonicity of trace distance, and the definitions of $V_2(k)$ and $V_3^A$, the second inequality follows from the triangle inequality and the fact that the trace distance is invariant with respect to an isometry, and the equality follows from the definitions of $\sigmahatBk$ and $\sigmatildBk$. Further, we know that 
\begin{align}
    \EE_A\left[ (V_3^A)^\dagger V_3^A\right] &= \EE_A\left[\sum_{\settildeR{l}{1}}\frac{\gamma_l}{\lambdalA}\LambdalA \right]  = \EE_A\left[\sum_{u^n}\frac{\gamma_{u^n}}{\lambdauA}\LambdauA \right] \geq \frac{(1-\varepsilon)(1-\eta)}{(1+\eta)}I^A \label{eq:lowerboundV_3}
\end{align}
where $\EE_A$ denotes the expectation with respect to Alice's codebook generation process, and the last inequality follows from the proof of Proposition \ref{lem:Lemma for not sPOVM}, where we use the other inequality of the Chernoff bound, arguing that the expectation is close to identity from both sides.

This implies, 
\begin{align}
    \EE_A\left[F\left( V_3^A|\sigmahatBk\ral, V_3^A| U_r^B(k)| \sigmatildBk \ral\right)\right]  &= \EE_A \left[\lal \sigmahatBk |  (V_3^A)^\dagger V_3^A| U_r^B(k)| \sigmatildBk \ral \right]  \nonumber \\
    & \geq \frac{(1-\varepsilon)(1-\eta)}{(1+\eta)} \lal \sigmahatBk | U_r^B(k)| \sigmatildBk \ral \nonumber \\
    & = \frac{(1-\varepsilon)(1-\eta)}{(1+\eta)} F\left(|\sigmahatBk\ral,(I^F\tensor U_{r}^{B}(k))|\sigmatildBk\ral \right)   \geq 1-\delta_2 \label{eq:Fid_lowerbound}
\end{align}
where $\delta_2(\delta) \searrow 0$ as $\delta \searrow 0$, and the first inequality follows from \eqref{eq:lowerboundV_3} and the second follows from the result of Lemma \ref{lem:postMeasuredCloseness}.
Using the above inequality, and applying expectation to \eqref{eq:xi2_and_xi1}, we obtain
\begin{align}
    \EE_A\left[ \|\xi_2^{T_p} - \xi_1^{T_p}\|_1 \right] 
    & \leq  2\sum_{k} \zeta_k \sqrt{1 -\EE_A\left[ F\left( V_3 |\sigmahatBk\ral,V_3 U_r^B(k)|\sigmatildBk \ral \right) \right]} \leq 2\sqrt{\delta_2} \nonumber 
\end{align}
where the first inequality follows from the Jensen's inequality for square root function, and the second inequality follows from \eqref{eq:Fid_lowerbound}. 

Now we move on to the next step. Toward this, for $\settildeR{l}{1}$ and $\settildeR{k}{2}$, recall the definition of the following collections of states:
\begin{align}
    |\hat\Psi_1(l)\ral & \deq \frac{(I^E\tensor \sqrt{\LambdalA})|\Psi_{\rho^{\tensor n}}\ral}{\sqrt{\lambdalA}},
    \quad 
    |\hat\Psi_2(k)\ral \deq \frac{(I^F\tensor \sqrt{\LambdakB})\Psi_{\rho^{\tensor n}}}{\sqrt{\lambdakB}}, \nonumber \\
    & |\hat\Psi_3(l,k)\ral \deq \frac{(I^{CR}\tensor \sqrt{\LambdalA\tensor \LambdakB})\Psi_{\rho^{\tensor n}}}{\sqrt{\lambdalkAB}}. \nonumber
\end{align}
Now consider the following lemma.
Using the operators provided by the Lemma \ref{lem:purityProj}, define the projectors $\Pi^A, \Pi^B$, and $\Pi^C$ as
\begin{align}
    \Pi^A \deq \sum_l \Pi_l^A\tensor \ketbra{l}, \quad \Pi^B \deq \sum_k \Pi_k^B\tensor \ketbra{k}, \quad \Pi^C \deq \sum_{l,k} \Pi_{l,k}^C\tensor \ketbra{l,k} \nonumber
\end{align}
Considering the action of Alice in distilling purity, define 
\begin{align}
    \xi_{3}^{T_p} \deq  \tr_{RT_gKL}\left\{\left(I^R\tensor U_P^C \tensor U_B' \tensor \Pi^A V_3^A\right)\Psi_{\rho^{\tensor n}}\left(I^R\tensor U_P^C \tensor U_B' \tensor \Pi^A V_3^A\right)^\dagger\right\}\tensor \ketbra{0}_{A_p}. \nonumber
\end{align}


\par\textbf{Step 3: Closeness of $\xi_2^{T_p}$ and $\xi_3^{T_p}$:} 
Using the above definition, consider the following analysis:
\begin{align}\label{eq:xi3_and_xi2}
    \|\xi_2^{T_p} - \xi_3^{T_p}\|_1 \!
    &= \!\sum_l\!\gamma_l\left\| V_3^B \left( U_p^A(l)\hat\Psi_1(l)(U_p^A(l))^\dagger  \!- (\Pi^A_l\tensor I^{RBC})\hat\Psi_1(l)(\Pi^A_l\tensor I^{RBC}) \tensor\ketbra{0}_{A_p}\right) (V_3^B)^\dagger  \right\|_1 \nonumber \\ 
    &\leq 2\sum_l\gamma_l \sqrt{1 - F\left( V_3^B U_p^A(l)|\hat\Psi_1(l)\ral, V_3^B(\Pi^A_l\tensor I^{RBC})\hat\Psi_1(l)\tensor \ket{0}_{A_p}\right)}
\end{align}
where the above inequalities use similar set of arguments as in \ref{eq:xi2_and_xi1}. Now employing identical bounds as in \eqref{eq:lowerboundV_3} and \eqref{eq:Fid_lowerbound}, we obtain, for sufficiently large $n$, and sufficiently small $\delta,\eta$
\begin{align}
    \EE_B\left[ \|\xi_2^{T_p} - \xi_3^{T_p}\|_1\right] \leq 2\sqrt{\delta_2},
\end{align}
where $\EE_B$ denotes expectation over Bob's codebook and the inequality uses Jensen's inequality. This completes the current step.
Moving further, using the projectors defined in Lemma \ref{lem:purityProj}, we define $\xi_4^{T_p}$ as 
\begin{align}
    \xi_4^{T_p} \deq \tr_{RT_gLK}\left\{(\Pi^A)\left(I^R \tensor U_P^C \tensor V_3^A \tensor \Pi^B V_3^B\right)\Psi_{\rho^{\tensor n}}\left(I^R \tensor U_P^C \tensor V_3^A \tensor \Pi^B V_3^B\right)^\dagger(\Pi^A)\right\}\tensor \ketbra{0}_{A_pB_p}.
\end{align}

\par\textbf{Step 4: Closeness of $\xi_3^{T_p}$ and $\xi_4^{T_p}$:} We have  
\begin{align}
    \|\xi_3^{T_p} - \xi_4^{T_p}\|_1 &\leq \sum_k \zeta_k
    \left\|(\Pi^AV_3^A) \left( U_p^B(k) \hat\Psi_2(k) (U_p^B(k))^\dagger \right.\right. \nonumber \\
    & \left.\left. \hspace{1.5in}- \Pi^B_k \hat\Psi_2(k) \Pi^B_k \tensor \ketbra{0}_{B_p} \right) ((V_3^A)^\dagger\Pi^A)\tensor \ketbra{0}_{A_p}\right\|_1 \nonumber \\
    &\leq \sum_k \zeta_k
    \left\|V_3^A \left( U_p^B(k) \hat\Psi_2(k) (U_p^B(k))^\dagger - \Pi^B_k \hat\Psi_2(k) \Pi^B_k \tensor \ketbra{0}_{B_p} \right) ((V_3^A)^\dagger)\right\|_1,
\end{align}
where the second inequality follows by using the Holder's inequality and the fact that $\Pi^A$ is a projector,  (which implies $\|\Pi^A\|_{\infty} \leq 1$). Note that the right hand side of the above inequality is similar to the right hand side obtained in \eqref{eq:xi3_and_xi2}, and hence using the result of Lemma \ref{lem:purityProj} and similar arguments as in \eqref{eq:xi3_and_xi2}, we obtain, for sufficiently large $n$, and sufficiently small $\delta,\eta$
\begin{align}
     \EE_A\left[ \|\xi_3^{T_p} - \xi_4^{T_p}\|_1\right] \leq 2\sqrt{\delta_2}.
\end{align}

Now we define $\xi^{T_p}_5$ considering the action of Charlie as
\begin{align}
    \xi^{T_p}_5 &\deq \tr_{RT_gLK}\left\{(\Pi^A\tensor \Pi^B)\left(I^R \tensor \Pi^C \tensor V_3^A \tensor  V_3^B\right)\!\Psi_{\rho^{\tensor n}}\!\left(I^R \tensor \Pi^C \tensor V_3^A \tensor V_3^B\right)^\dagger\!(\Pi^A\tensor \Pi^B)\right\}\nonumber \\
    & \hspace{3in}\tensor \ketbra{0}_{A_pB_pC_p}. \nonumber
\end{align}
\par\textbf{Step 5: Closeness of $\xi_4^{T_p}$ and $\xi_5^{T_p}$:}  Using the result of Lemma \ref{lem:purityProj}, we proceed as follows:
\begin{align}
    \|\xi_4^{T_p} - \xi_5^{T_p}\|_1 & \leq \sum_{l,k}\frac{\gamma_l\zeta_k}{\lambdalA\lambdakB}\lambdalkAB\left\|(\Pi_l^A\tensor \Pi_k^B)\left( U_p^C(l,k) \hat\Psi_3(l,k) \left(U_p^C(l,k)\right)^\dagger \right.\right. \nonumber \\ 
    & \hspace{2in}\left.\left.- \Pi_{l,k}^C\Psi_3(l,k) \Pi_{l,k}^C\tensor \ketbra{0}_{C_p}\right) (\Pi_l^A\tensor \Pi_k^B)\tensor \ketbra{0}_{A_pB_p} \right\|_1 \nonumber\\
     & \leq \sum_{l,k}\frac{\gamma_l\zeta_k}{\lambdalA\lambdakB}\lambdalkAB\left\|\left( U_p^C(l,k) \hat\Psi_3(l,k) \left(U_p^C(l,k)\right)^\dagger - \Pi_{l,k}^C\Psi_3(l,k) \Pi_{l,k}^C\tensor \ketbra{0}_{C_p}\right) \right\|_1 \nonumber\\
    &\leq 2\sum_{l,k}\frac{\gamma_l\zeta_k}{\lambdalA\lambdakB}\lambdalkAB \sqrt{1- F\left(U_p^C(l,k) \hat|\Psi_3(l,k)\ral,\Pi_{l,k}^C|\Psi_3(l,k)\ral \right)} \leq 2\sum_{l,k}\frac{\gamma_l\zeta_k}{\lambdalA\lambdakB}\lambdalkAB \sqrt{\delta_2}.
\end{align}
This implies, $\EE_{A,B}\left[\|\xi_4^{T_p} - \xi_5^{T_p}\|_1\right] 2\sqrt{\delta_2}$, and hence for any given $\epsilon\in(0,1)$, and for sufficiently large $n$, and sufficiently small $\delta, \eta >0$,  $\EE_{A,B}\left[\|\xi_4^{T_p} - \xi_5^{T_p}\|_1\right]$ can be made arbitrary small.

Now as a final step, we consider the case when Alice and Bob chooses to bin their measurement outcomes before sending over the dephasing channel, i.e., the case when $b>0$. We term the error introduced by this process as the binning error.
\par{\textbf{Step 5: Closeness of $\xi_b^{T_p}$ and $\xi^{T_p}$}:} In this step, we bound the error that is introduced when Charlie tries to undo the binning operation by performing the unitary $U_F$. We show that Charlie will be successful if the rate at which binning is performed  is constrained by a non-trivial bound (to be obtained in Proposition \ref{prop:binning}), and hence the error involved in undoing the binning operation can be made arbitrary small, in an expected sense, for all sufficiently large $n$.

For $l \in \settildeR{l}{1}$ and $k \in \settildeR{k}{2}$, define $d(l,k)\deq F(i,j)$, such that $(U^n(l),V^n(k)) \in\mathcal{B}_1(i)\times \mathcal{B}_2(j)$. 
Note that $d(\cdot,\cdot)$ captures the overall effect of the binning followed by the decoding function $F$. Further, for $l \in \settildeR{l}{1}$ and $k \in \settildeR{k}{2}$, define 
\begin{align*}
    \tilde{\Psi}_{\rho^{\tensor n}} \deq \frac{(I^{RC}\tensor\sqrt{A_l\tensor B_k)}\Psi_{\rho^{\tensor n}}(I^{RC} \tensor \sqrt{A_l\tensor B_k})}{\gamma_l\zeta_k}.
\end{align*}
Using these definitions, we obtain
\begin{align}
    \|\xi^{T_p} - \xi_b^{T_p}\|_1 & \leq \sum_{l,k} \gamma_l \zeta_k \left\|(I^{R}  \tensor U_p^A(l)U_r^A(l) \tensor U_B(k)U^B_r(k))\left(  U_p^C(l,k){\tilde{\Psi}_{\rho^{\tensor n}}}(U_p^C(l,k))^\dagger \right. \right. \nonumber \\
    & \hspace{50pt} \left.\left.-  U_p^C(d(l,k)) \tilde{\Psi}_{\rho^{\tensor n}}\tensor U_p^C(d(l,k))^\dagger \right)(I^{R}  \tensor U_p^A(l)U_r^A(l) \tensor U_B(k)U^B_r(k))^\dagger \right\| \nonumber \\
    & = \sum_{l,k} \gamma_l \zeta_k \left\|  U_p^C(l,k){\tilde{\Psi}_{\rho^{\tensor n}}}(U_p^C(l,k))^\dagger -  U_p^C(d(l,k)) \tilde{\Psi}_{\rho^{\tensor n}}\tensor U_p^C(d(l,k))^\dagger  \right\| \nonumber \\
\end{align}
Now, consider the following proposition.
\begin{proposition}\label{prop:binning}
For any $\epsilon \in (0,1)$, and sufficiently large $n$ and sufficiently small $\eta,\delta > 0$, we have $\EE[\|\xi^{T_p} - \xi_b^{T_p}\|_1] \leq \epsilon$ if $\tilde{R}_1 - R_1 + \tilde{R_2} - R_2 \geq I(U;V)_{\sigma_3}$.
\end{proposition}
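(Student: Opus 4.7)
The plan is to reduce the bound displayed immediately before the proposition to an expected classical decoder-failure probability, and then run a Wyner--Ahlswede-style random binning analysis. The key observation is that each summand on the right-hand side of that bound vanishes identically on the event $\{d(l,k)=(l,k)\}$, because there $U_p^C(l,k)$ and $U_p^C(d(l,k))$ coincide and conjugate the same operator; on the complementary event, the trace norm of the difference of the two conjugations is at most $2\|\tilde{\Psi}_{\rho^{\tensor n}}\|_1$ by the triangle inequality. Combining with the prefactor $\gamma_l\zeta_k$ and using the identity $\gamma_l\zeta_k\,\|\tilde{\Psi}_{\rho^{\tensor n}}\|_1 = \tr\{(A_l\tensor B_k)\rho^{\tensor n}_{AB}\} =: p_{l,k}$, which is a sub-probability on pairs by the sub-POVM property, I obtain
\[
 \|\xi^{T_p}-\xi_b^{T_p}\|_1 \;\leq\; 2\sum_{l,k} p_{l,k}\,\mathbbm{1}_{\{d(l,k)\neq (l,k)\}}.
\]

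Next I would unpack the decoder-failure event using the definitions of $D_{i,j}$ and $F$: $\{d(l,k)\neq (l,k)\}$ is contained in (a) $\{(U^n(l),V^n(k))\notin \mathcal{T}_\delta^{(n)}(UV)\}$ together with (b) the existence of a distinct pair $(l',k')\neq(l,k)$ with $(U^n(l'),V^n(k'))$ jointly typical and $(\iota_1(l'),\iota_2(k'))=(\iota_1(l),\iota_2(k))$. Taking expectation over the codebooks and the independent uniform binning, (a) is controlled by standard joint-typicality estimates applied to pairs weighted by $p_{l,k}$, with error vanishing as $\delta\searrow 0$ and $n\to\infty$. For (b) I union-bound: the independence of $U^n(l')$ and $V^n(k')$ under their pruned marginals, together with the joint-typicality lemma, gives $\PP((U^n(l'),V^n(k'))\in \mathcal{T}_\delta^{(n)}(UV))\leq 2^{-n[I(U;V)_{\sigma_3}-\delta']}$; the independent uniform binning contributes $2^{-n(R_1+R_2)}$; and there are at most $2^{n(\tilde{R}_1+\tilde{R}_2)}$ competing pairs, so the expected collision count is $2^{n[\tilde{R}_1-R_1+\tilde{R}_2-R_2-I(U;V)_{\sigma_3}+\delta']}$, which vanishes exponentially under the stated rate constraint (read, consistent with the preceding Remark, as an upper bound on the binning budget).

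The main technical obstacle I anticipate is the joint randomness of the codebooks, the binning, and the weights $p_{l,k}$---the latter only forming a genuine sub-probability on the high-probability event of Proposition \ref{lem:Lemma for not sPOVM} that $\{A_{U^n(l)}\}$ and $\{B_{V^n(k)}\}$ are true sub-POVMs. I would handle this by restricting to that event and absorbing its complementary probability into $\epsilon$, while using the uniform concentration of $\gamma_l,\zeta_k$ guaranteed by the operator Chernoff estimate (Lemma \ref{lem:ChernoffBound}) to replace these weights by their nominal values up to $(1\pm\eta)$ factors. After this bookkeeping, the remaining computation is the classical mutual-covering/Marton binning argument, and choosing $\delta,\eta$ small and $n$ large then yields $\EE[\|\xi^{T_p}-\xi_b^{T_p}\|_1]\leq \epsilon$.
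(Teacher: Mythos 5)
Your proposal is correct, and its core is the same as the paper's: a Slepian--Wolf-type union bound over jointly typical competing pairs landing in the same bin, giving the exponent $(\tilde{R}_1-R_1)+(\tilde{R}_2-R_2)-I(U;V)_{\sigma_3}$; both computations also confirm that the rate condition in the proposition must be read as an \emph{upper} bound on the binning budget, as you assumed (the paper's own final bound vanishes only when $\tilde{R}_1-R_1+\tilde{R}_2-R_2 < I(U;V)_{\sigma_3}-2\delta_1$, so the stated ``$\geq$'' is a typo). The differences sit at the two ends of the argument, and they cut in opposite directions. First, your reduction is tighter than the paper's and, strictly speaking, more correct: you keep the exact weights $p_{l,k}=\gamma_l\zeta_k\|\tilde{\Psi}_{\rho^{\tensor n}}\|_1=\tr\{(A_l\tensor B_k)\rho_{AB}^{\tensor n}\}$ and separately handle the failure event (a) that $(U^n(l),V^n(k))$ itself is not jointly typical. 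The paper instead bounds $\|\tilde{\Psi}_{\rho^{\tensor n}}\|_1\leq 1$, keeps the product weights $\gamma_l\zeta_k$, and replaces $\mathbbm{1}_{\{d(l,k)\neq(l,k)\}}$ by $\mathbbm{1}_{\{\CalJ\}}$ --- an inclusion that silently discards event (a), which under the weights $\gamma_l\zeta_k$ is \emph{not} negligible (independently drawn codewords are jointly typical only with probability $\approx 2^{-nI(U;V)}$); your $p_{l,k}$-weighting is exactly what makes this step honest. Be aware, though, that your appeal to ``standard joint-typicality estimates'' for (a) hides the one nontrivial piece: one must first relate $p_{l,k}$ to $\lambda^{AB}_{u^n v^n}$ via the typical-projector approximations before the classical estimate $\sum_{(u^n,v^n)\notin\TDelta(UV)}\lambda^{AB}_{u^n v^n}\rightarrow 0$ applies. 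Second, your union bound for (b) is coarser than the paper's: you charge every one of the $2^{n(\tilde{R}_1+\tilde{R}_2)}$ competitors the full binning factor $2^{-n(R_1+R_2)}$, but competitors of the form $(l,k')$ or $(l',k)$, and competitors whose codeword repeats $U^n(l)$ or $V^n(k)$, are automatically in the same bin in one coordinate and only collect $2^{-nR_2}$ or $2^{-nR_1}$; the paper splits these into five terms and invokes $\tilde{R}_1<S(U)$, $\tilde{R}_2<S(V)$ to absorb the repeated-codeword terms. Because those exceptional classes have proportionally fewer members, the dominant exponent --- and hence your conclusion --- is unchanged, but the case split is needed for the computation to be valid as you stated it.
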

\begin{proof}
The proof is provided in Appendix \ref{proof:Prop:binning}.
\end{proof}

Finally, we complete the proof by combining the results from all the above steps in the following. Let $\11_{\{sP\}} \deq \11_{\{sP-1\}}\11_{\{sP-2\}}$. Using this, we have
\begin{align*}
 \|\xi^{T_p}\11_{\{sP\}} - \ketbra{0}^{T_p}\|_1
     & \leq \|\xi^{T_p} - \xi_b^{T_p}\|_1 +  \|\xi^{T_p}_b - \ketbra{0}^{T_p}\|_1 + \|\xi^{T_p}_b\|_1(1-\11_{\{sP\}})  \\
     & \leq  \|\xi^{T_p} - \xi_b^{T_p}\|_1 + 
     \|\xi^{T_p}_b - \xi_1^{T_p}\|_1 + 
     \|\xi^{T_p}_1 - \xi_2^{T_p}\|_1 + 
     \|\xi^{T_p}_2 - \xi_3^{T_p}\|_1 +
     \|\xi^{T_p}_3 - \xi_4^{T_p}\|_1 \\
     & \hspace{1in} + 
     \|\xi^{T_p}_4 - \xi_5^{T_p}\|_1 + 
     \|\xi^{T_p}_5 - \ketbra{0}^{T_p}\|_1 + \|\xi^{T_p}_b\|_1(1-\11_{\{sP\}})
\end{align*}
Taking expectation of the above inequality and using (i) the closeness of trace norm proved in each of the steps, and (ii) the result from Proposition \ref{lem:Lemma for not sPOVM}, we have the desired result. This completes the proof.

\appendices

\section{Proof of Lemmas}
\subsection{Proof of Lemma \ref{lem:ChernoffBound}}
\label{proof:lem:ChernoffBound}
Similar to \cite{Wilde_book}, we first make an additional asuumption in the Bernstein Trick \cite[Lemma 17.3.3]{Wilde_book} and  prove the following. 


Let  $\{X_i\}_{i\in [N]}$ be an IID positive semi-definite random collection of operators, represented by a generic random operator $X$. Then for any pair of operators $Y, \Pi \geq 0$, such that $\Pi Y \Pi \geq y \Pi$, and a positive real number $t$, the following inequality holds:
\begin{align}
    \PP\left( \sum_{i=1}^N X_i \nleq NY \right) \leq \dim{(\mathcal{H})}\|\EE\left[\exp{t\; \Pi(X-yI)\Pi}\right]\|^N_\infty
\end{align}
The proof of the above inequality is as follows:
\begin{align}
    \PP\left( \sum_{i=1}^N X_i \nleq NY \right) &= \PP\left( \sum_{i=1}^N (X_i-Y) \nleq 0 \right) \nonumber \\
    & = \PP\left( \sum_{i=1}^N T(X_i-Y)T^\dagger \nleq 0 \right) = \PP\left( \sum_{i=1}^N TX_iT^\dagger \nleq NTYT^\dagger \right), \nonumber
\end{align}
where $T = \sqrt{t}\Pi$. Since $\Pi Y \Pi \geq y\Pi$, we have 
\begin{align*}
    \PP\left( \sum_{i=1}^N TX_iT^\dagger \nleq NTYT^\dagger \right) &\leq \PP\left( \sum_{i=1}^N TX_iT^\dagger \nleq NT(yI)T^\dagger \right)\nonumber \\
    & = \PP\left( \sum_{i=1}^N T(X_i-yI)T^\dagger \nleq 0 \right)\nonumber\\
    & \leq \Tr\left\{\EE\left[\exp{\sum_{i=1}^N T(X_i-yI)T^\dagger}\right]\right\} \nonumber \\
    & \leq \dim{(\mathcal{H})}\|\EE\left[\exp{t\; \Pi(X-yI)\Pi}\right]\|^N_\infty,
\end{align*}
where the second inequality uses the Markov inequality and the last inequality follows from the arguments provided in \cite[(17.29) - (17.35)]{Wilde_book}.

Moving on, let $X' \deq \Pi X \Pi$,  which gives
\begin{align*}
    \|\EE\left[\exp{t\; \Pi(X-yI)\Pi}\right]\|^N_\infty = \|\EE\left[\exp{t\;X'} \exp{-ty\Pi}\right]\|^N_\infty
\end{align*}
Here, we make an additional assumption of $\EE[X] \leq mI$, for some $0 \leq m \leq 0 y.$ Observing that $X'$ and $\Pi$ commute, and using the inequality \cite[(17.41)]{Wilde_book}, we obtain
\begin{align*}
    \exp{tX'} - \Pi \leq X'(\exp{t}-1),
\end{align*}
giving us $\EE[\exp{tX'}] \leq \EE[X'](\exp{t}-1)+\Pi \leq (m\exp{t}+1-m)\Pi$.
Further, using the simplification \cite[(17.48)-(17.52)]{Wilde_book}, we get the relation
\begin{align*}
    \PP\left( \sum_{i=1}^N X_i \nleq NY \right) &\leq \dim{(\mathcal{H})}\|\EE\left[\exp{t\; \Pi(X-yI)\Pi}\right]\|^N_\infty \nonumber \\
    & \leq \dim{(\mathcal{H})}\big( (m\exp{t}+1-m)\exp{-ty} \big)^N \leq \dim{(\mathcal{H})}\exp{-ND(y||m)}
\end{align*}
where $D(\cdot||\cdot)$ denotes the Kullback–Leibler divergence \cite{2006EIT_CovTho}.
TO summarize, we have obtained the following: Given a collection of positive semi-definite random operators  $\{X_i\}_{i\in [N]}$ be an IID, such that $\EE[\frac{1}{N}\sum_i^N X_i] \leq mI $, for some $m > 0$ then for any pair of operators $Y, \Pi \geq 0$, such that $\Pi Y \Pi \geq y \Pi$, and $y > m$, the following inequality holds:
\begin{align*}
    \PP\left( \sum_{i=1}^N X_i \nleq NY \right) \leq \dim{(\mathcal{H})}\exp{-ND(y||m)}.
\end{align*}
Finally, the prove completes by identifying $X_i$ with $a A_m^{-\frac{1}{2}}A_iA_m^{-\frac{1}{2}}$ and $Y = (1+\eta)aI$ for some $\eta \in (0,\min(\frac{1}{2},\frac{1-a}{a}))$, and using the inequality $D((1+\eta)a||a) \geq \frac{\eta^2 a}{4\ln{2}}$.

\subsection{Proof of Lemma \ref{lem:postMeasuredCloseness}}
\label{proof:lem:postMeasuredCloseness}
We begin by observing the fact that $\sigmahatAu$ and $\sigmatildAu$ are purification of the states $\rhohatuA$ and $\rhotilduA$, respectively. More precisely,
\begin{align*}
    \rhohatuA = \Tr_{A}{|\hat{\sigma}_u^n\ral\lal\hat{\sigma}_u^n|^{AE}} \quad \mbox{ and } \quad \rhotilduA = \Tr_{A}{|\tilde{\sigma}_u^n\ral\lal\hat{\sigma}_u^n|^{AE}} 
\end{align*}
Also, note that the Hilbert space $\mathcal{H}_A^{\tensor}$, corresponding to the subsystem $A^n$ purifies the states $\rhohatuA$ and $\rhotilduA$. This implies, from Uhlmann's theorem, there exists a unitary $U_r(l)$ acting on the subsystem $A^n$, such that, for $u^n = U^n(l)$, we have
\begin{align*}
    F(\rhohatuA,\rhotilduA) = F(\sigmahatAu, U_r(l)\sigmatildAu).
\end{align*}
Finally, using the relation \cite[Theorem 9.3.1]{Wilde_book}, we obtain the desired result.
An identical analysis for $\sigmahatBv$ and $\sigmatildBv$ produces the second statement of the lemma.

\section{Proof of Propositions}
\subsection{Proof of Proposition \ref{prop:binning}}
\label{proof:Prop:binning}
We begin by defining  $\mathcal{J}$  as
\begin{align*}
    \mathcal{J} &\deq \left\{\exists (\tilde{l},\tilde{k},i,j): (U^n(l),V^n(k)) \in \mathcal{B}_1(i)\times\mathcal{B}_2(j),
    (U^n(\tilde{l}),V^n(\tilde{k})) \in \mathcal{B}_1(i)\times\mathcal{B}_2(j),\right. \nonumber \\
    & \left. \hspace{2in}(U^n(\tilde{l}),V^n(\tilde{k})) \in \TDelta(U,V) \right\}.
\end{align*}
Using this, consider the following simplification:
\begin{align}
    \|\xi^{T_p} - \xi_b^{T_p}\|_1  
    & \leq  \sum_{l,k} \gamma_l \zeta_k \11_{\{(l,k)\neq d(l,k)\}}2\underbrace{\|\tilde{\Psi}_{\rho^{\tensor n}}\|_1}_{\leq 1} \nonumber\\
    & \leq 2\sum_{l,k} \gamma_l \zeta_k \11_{\{ \mathcal{J}\}} = \frac{2}{2^{n(\tilde{R}_1+\tilde{R}_2)}}\sum_{u^n,v^n}\sum_{l,k} \11_{\{U^n(l) = u^n, V^n(k) = v^n\}} \11_{\{ \mathcal{J}\}}.\nonumber
\end{align}
Note that, for every $(u^n,v^n,l,k)$, we have
\begin{align*}
    \EE&\left[\11_{\{U^n(l) = u^n, V^n(k) = v^n\}} \11_{\{ \mathcal{J}\}}\right] \nonumber\\
    & \leq \sum_{(\tilde{u}^n,\tilde{v}^n)\in \TDelta(U,V)}\sum_{\tilde{l},\tilde{k}}\sum_{i,j}  \EE\left[\11\{U^n(l) = u^n, V^n(k) = v^n\}\11\{U^n(\tilde{l}) = \tilde{u}^n, V^n(\tilde{k}) = \tilde{v}^n\} \right.  \\
    & \hspace{1.5in}\left. \times\11{\{(u^n,v^n)\in\mathcal{B}_1(i)\times\mathcal{B}_2(j)\}}  \11{\{(\tilde{u}^n,\tilde{v}^n)\in\mathcal{B}_1(i)\times\mathcal{B}_2(j)\}}  \right] \nonumber \\
    &\leq  \frac{\lambdauA\lambdavB}{(1-\varepsilon)^2(1-\varepsilon')^2}
    2^{-n(I(U;V)-\delta_1)} \Big[2^{n(\tilde{R}_1-R_1)}2^{n(\tilde{R}_2-R_2)}+2^{n(\tilde{R}_1-R_1)}+2^{n(\tilde{R}_2-R_2)}\nonumber\\
    & \hspace{1in}+2^{-n(S(U)-\delta_1)} 2^{n\tilde{R}_1}2^{n(\tilde{R}_2-R_2)}+2^{-n(S(V)-\delta_1)} 2^{n\tilde{R}_2}2^{n(\tilde{R}_1-R_1)}\Big]\nonumber\\
    &\leq 5\frac{\lambdauA\lambdavB}{(1-\varepsilon)^2(1-\varepsilon')^2} 2^{-n(I(U;V)-2\delta_1)}2^{n(\tilde{R}_1-R_1)}2^{n(\tilde{R}_2-R_2)},
\end{align*}
where $\delta_1\ssearrow 0$ as $\delta\ssearrow 0$. {The first inequality follows from the union bound}. {The second inequality follows by evaluating the expectation of the indicator functions and the last inequality follows from the inequalities $\tilde{R}_1< S(U)$ and $\tilde{R}_2< S(V)$}. This implies,
\begin{align*}
    \EE\big[\|\xi^{T_p} - \xi_b^{T_p}\|_1  \big] \leq \frac{10}{(1-\varepsilon)^2(1-\varepsilon')^2} 2^{-n(I(U;V)-2\delta_1)}2^{n(\tilde{R}_1-R_1)}2^{n(\tilde{R}_2-R_2)},
\end{align*}
which completes the proof.

\newpage
\bibliographystyle{IEEEtran}

\bibliography{IEEEabrv,references,wisl_IC}

\end{document}